\documentclass{article}
\usepackage{amsmath,amssymb,color,url,amsthm,cases}
\usepackage[pdftex,hiresbb]{graphicx}
\usepackage{type1cm}
\usepackage{subcaption}
\usepackage{geometry}
\usepackage{here}
\allowdisplaybreaks
\newtheorem{thm}{Theorem}[section]

\newtheorem{lem}[thm]{Lemma}
\newtheorem{prop}[thm]{Proposition}
\newtheorem{ass}[thm]{Assumption}
\newtheorem{ex}[thm]{Example}

\newtheorem{assA}{Assumption}
\theoremstyle{definition}
\newtheorem{defn}[thm]{Definition}
\newtheorem{rem}[thm]{Remark}
\def\l     {\left}
\def\r     {\right}
\def\<     {\langle}
\def\>     {\rangle}
\def\fin   {\hfill{$\Box$}\vspace{5mm}}
\def\d     {\displaystyle}
\def\bbC   {{\mathbb C}}
\def\bbE   {{\mathbb E}}
\def\bbN   {{\mathbb N}}
\def\bbP   {{\mathbb P}}
\def\bbR   {{\mathbb R}}
\def\calB  {{\mathcal B}}

\def\calD  {{\mathcal D}}
\def\calE  {{\mathcal E}}
\def\calF  {{\mathcal F}}

\def\ve    {\varepsilon}

\def\olnu    {\overline{\nu}}

\def\tH    {\widetilde{H}}
\def\tN    {\widetilde{N}}

\def\olM {\overline{M}}
\def\whf {\widehat{f}}
\def\whg {\widehat{g}}
\def\Dom   {\mathop{\mathrm{Dom}}\nolimits}
\def\dequiv {\overset{\mathrm{d}}{=}}
\begin{document}
\title{Local risk-minimization for exponential additive processes}
\author{Takuji Arai\footnote{Department of Economics, Keio University, 2-15-45 Mita, Minato-ku, Tokyo, 108-8345, Japan. \\ (arai.z8@keio.jp)}}
\maketitle

\begin{abstract}
We explore local risk-minimization, a quadratic hedging method for incomplete markets, in exponential additive models.
The objectives are to derive explicit mathematical expressions and to conduct numerical experiments.
While local risk-minimization is well studied for L\'evy processes, little is known for the additive process case
because, unlike L\'evy processes, the L\'evy measure for an additive process depends on time, which significantly complicates the mathematical framework.
This paper shall provide a set of necessary conditions for deriving expressions for LRM strategies in exponential additive models, as integrability conditions on the L\'evy measure,
which allow us to confirm whether these conditions are satisfied for given concrete models.
In the final section, we introduce the variance-gamma scaled self-decomposable process, a Sato process that generalizes the variance-gamma process, as a primary example,
and perform numerical experiments. \\
{\bf Keywords:} Local risk-minimization, Additive processes, Malliavin-Skorohod calculus, Carr-Madan method.
\end{abstract}

%
%
\section{Introduction}\setcounter{equation}{0}
Local risk-minimization is a representative quadratic hedging method for contingent claims in incomplete financial markets.
It was undertaken by F\"ollmer and Sondermann \cite{FSo} and F\"ollmer and Schweizer \cite{FS}, and has been well studied for three decades since then.
For an overview of local risk-minimization, see Schweizer \cite{Sch}.
This paper aims to derive explicit representations of locally risk-minimizing (LRM) strategies for financial market models
in which the asset price process is an exponential additive process and to present some numerical results.

There is a large amount of literature on local risk-minimization,
but here we introduce just a part of it by focusing on explicit representations and numerical analysis for market models described by a jump process.
Arai and Suzuki \cite{AS} derived an explicit expression for LRM strategies for models with the asset price process described by
the solution to a L\'evy-driven stochastic differential equation.
In \cite{AS}, they used Malliavin calculus for L\'evy processes by Sol\'e et al. \cite{SUV}, and a Clark-Ocone type formula under a change of measure by Suzuki \cite{Suz}.
Arai et al. \cite{AIS} developed a numerical method for LRM strategies in models with an exponential L\'evy process as the asset price process,
using the Carr-Madan method, which is based on the fast Fourier transform (FFT).
A L\'evy process is, roughly speaking, a stochastic process having independent and stationary increments.
On the other hand, relaxing the requirement of stationary increments in the definition of L\'evy processes, we call such processes additive processes.
That is, the L\'evy measure for an additive process depends on time.
Therefore, it is natural to examine additive processes as a generalization of L\'evy processes,
and deriving expressions and developing numerical methods for market models described by an additive process are significant issues in mathematical finance.

To derive a mathematical expression for LRM strategies in models with an exponential additive asset price process, called exponential additive models henceforth,
it is necessary to employ a Malliavin calculus framework available for additive processes.
In this regard, Malliavin-Skorohod calculus, as developed by Di Nunno and Vives \cite{DV}, is useful,
and a Clark-Ocone-type formula under a change of measure for additive processes obtained by Handa et al. \cite{HSS} is available.
Local risk-minimization has also been discussed in \cite{HSS} using the same sort of argument as \cite{AS};
however, their argument contains inaccurate statements and propositions for which no proof is given.
We should not simply regard an additive process as a non-stationary L\'evy process and think that the only difference lies in whether the L\'evy measure is time-dependent.
We need to note that the time dependence of the L\'evy measure makes a significant difference in the mathematical discussion.

In this paper, we derive expressions of LRM strategies for exponential additive models and emphasize how time dependence affects mathematical discussion.
It is well-known that LRM strategies are given through the F\"ollmer-Schweizer decomposition.
The minimal martingale measure (MMM), one of the equivalent martingale measures, plays a vital role in discussing the F\"ollmer-Schweizer decomposition.
We need the MMM to exist and be square integrable to derive expressions for LRM strategies.
To this end, we will provide conditions that guarantee these properties as integrability conditions on the L\'evy measure,
enabling verification that they are satisfied when a concrete model is given.

As an example of additive processes that frequently appear in mathematical finance, we introduce the variance-gamma scaled self-decomposable (VGSSD) process, 
first proposed by Carr et al. \cite{CGMY}, and conduct numerical experiments of LRM strategies for models written by the VGSSD process.
Here, the VGSSD process is defined as a Sato process (an additive self-similar process with self-decomposable law)
whose distribution at unit time follows a variance-gamma (VG) distribution.
Further, a VG distribution refers to the distribution of a VG process, a L\'evy process defined as a time-changing Brownian motion by a gamma process.

The outline of this paper is as follows. Section 2 introduces fundamental assumptions used throughout the paper and defines LRM strategies and the MMM.
In Section 3, we focus on exponential additive models.
In particular, we provide conditions for discussing LRM strategies and for ensuring the existence and square integrability of the MMM.
Section 4 examines LRM strategies for exponential additive models, the main theme of this paper.
First, we address the result of \cite{HSS} and derive a mathematical expression for LRM strategies in exponential additive models.
Second, we rewrite it as an expression that enables us to compute LRM strategies numerically using the Carr-Madan method.
In Section 5, we introduce the VGSSD process as a representative additive process and conduct numerical experiments.
Section 6 concludes this paper.

%
%
\section{Preliminaries}\setcounter{equation}{0}
Throughout this paper, we consider a financial market model composed of one riskless asset and one risky asset with maturity $T>0$.
Without loss of generality, we may assume that the interest rate is zero.
Suppose that the risky asset price process $S=\{S_t\}_{0\leq t\leq T}$ is given by a positive-valued special semimartingale defined on
a probability space $(\Omega,\calF,\bbP)$ with the canonical decomposition $S_t=S_0+M_t+A_t$,
where $S_0>0$, $M=\{M_t\}_{t\in[0,T]}$ is a martingale with $M_0=0$, and $A=\{A_t\}_{t\in[0,T]}$ is a predictable process of finite variation with $A_0=0$. 
Now, we suppose that $S$ satisfies the following conditions:

\begin{ass}\label{ass-SC}\begin{enumerate}
\item[(SC1)] $\bbE[[M]_T]<\infty$ and $\d{\bbE\l[\l(\int_0^T|dA_t|\r)^2\r]<\infty}$.\vspace{-3mm}
\item[(SC2)] There is a predictable process $\lambda=\{\lambda_t\}_{t\in[0,T]}$ such that $\d{A_t=\int_0^t\lambda_sd\langle M\rangle_s}$. \vspace{-3mm}
\item[(SC3)] The process $K=\{K_t\}_{t\in[0,T]}$ defined as $\d{K_t:=\int_0^t\lambda^2_sd\langle M\rangle_s}$ is continuous such that $K_T<\infty$, $\bbP$-a.s.
\end{enumerate}\end{ass}

\noindent
When $S$ satisfies Assumption \ref{ass-SC} above, $S$ is said to satisfy the structure condition (SC), which is closely related to the no-arbitrage condition.

Here, we define locally risk-minimizing (LRM) strategies as follows:

\begin{defn}\label{defn-LRM}\begin{enumerate}
\item $\Xi=(\xi^1, \xi^2)$ is said to be an $L^2$-strategy if it satisfies the following conditions:\vspace{-2mm}
      \begin{enumerate}
      \item $\xi^1=\{\xi^1_t\}_{t\in[0,T]}$ is a predictable process satisfying
            \begin{equation}\label{eq-xi} \bbE\l[\int_0^T(\xi^1_t)^2d\langle M\rangle_t+\l(\int_0^T|\xi^1_tdA_t|\r)^2\r]<\infty. \end{equation}
            $\xi^1_t$ represents the amount of units of the risky asset an investor holds at time $t$.
      \item $\xi^2=\{\xi^2_t\}_{t\in[0,T]}$ is an adapted process representing the amount of units of the riskless asset.
      \item For any $t\in[0,T]$, denote $V_t(\Xi):=\xi^1_tS_t+\xi^2_t$, which provides the corresponding wealth at time $t\in[0,T]$.
            In particular, $V_0(\Xi)$ gives the initial cost.
            Then, $V(\Xi)=\{V_t(\Xi)\}_{t\in[0,T]}$ is a right continuous process with $\bbE[V_t^2(\Xi)]<\infty$ for every $t\in[0,T]$.
      \end{enumerate}\vspace{-3mm}
\item An $L^2$-strategy $\Xi$ is said to be self-financing, if it satisfies $\d{V_t(\Xi)=V_0(\Xi)+\int_0^t\xi^1_sdS_s}$ for any $t\in[0,T]$.
\item Let $F$ be a square integrable random variable representing the payoff of a contingent claim at maturity $T$.
      For an $L^2$-strategy $\Xi$, the corresponding cost process $C(\Xi)=\{C_t(\Xi)\}_{t\in[0,T]}$ for claim $F$ is defined as
      \[ C_t(\Xi):=F{\bf 1}_{\{t=T\}}+V_t(\Xi)-\int_0^t\xi^1_sdS_s, \ \ \ t\in[0,T]. \]
\item An $L^2$-strategy $\Xi$ is called the LRM strategy for claim $F$, if $V_T(\Xi)=F$, and $[C(\Xi),M]$ is a uniformly integrable martingale.
      Roughly speaking, an $L^2$-strategy $\Xi$, which is not necessarily self-financing, is LRM
      if it is the replicating strategy that minimizes a risk caused by $C(\Xi)$ in the $L^2$-sense among all replicating strategies.
\end{enumerate}\end{defn}

\begin{rem}
Definition \ref{defn-LRM} above is based on Theorem 1.6 of Schweizer \cite{Sch}.
Note that the structure condition (SC) does not usually include the continuity of $K$, but it is included here because we use Theorem 1.6 of \cite{Sch}.
\end{rem}

\noindent
It is sufficient to get a representation of $\xi^1$ in order to obtain a representation of the LRM strategy $\Xi$ for claim $F$,
since $\xi^2$ is automatically determined by $\xi^1$. Here, an $F\in L^2(\bbP)$ admits a F\"ollmer-Schweizer decomposition, if it can be described by
\[ F=F_0+\int_0^T\xi^{FS}_tdS_t+L_T^{FS}, \]
where $F_0\in\bbR$, $\xi^{FS}=\{\xi^{FS}_t\}_{t\in[0,T]}$ is a predictable process satisfying (\ref{eq-xi}),
and $L^{FS}=\{L^{FS}_t\}_{t\in[0,T]}$ is a square-integrable martingale orthogonal to $M$ with $L_0^{FS}=0$.
In addition, Proposition 5.2 of \cite{Sch} proved that, under Assumption \ref{ass-SC},
the LRM strategy $\Xi=(\xi^1,\xi^2)$ for claim $F\in L^2(\bbP)$ exists if and only if $F$ admits a F\"ollmer-Schweizer decomposition;
and $\xi^1_t=\xi^{FS}_t$ holds. As a result, it suffices to obtain a representation of $\xi^F$ in order to get the LRM strategy $\Xi$.

Next, we discuss the minimal martingale measure (MMM). A probability measure $\bbP^*$ equivalent to $\bbP$ is called the MMM if it is an equivalent martingale measure
under which any square integrable $\bbP$-martingale orthogonal to $M$ remains a martingale under $\bbP^*$.
Here, for a semimartingale $Y=\{Y_t\}_{t\in[0,T]}$ with $Y_0=0$, a process $X=\{X_t\}_{t\in[0,T]}$ is called the stochastic exponential of $Y$
if it is the solution to the stochastic differential equation (SDE):
\[ dX_t=X_{t-}dY_t, \ \ \ X_0=1. \]
Wedenote it as $X_t=\calE_t(Y)$.
Now, we define a process $M^\lambda=\{M^\lambda_t\}_{t\in[0,T]}$ as $\d{M^\lambda_t:=-\int_0^t\lambda_sdM_s}$,
and consider its stochastic exponential $Z:=\calE(M^\lambda)$, that is, the process $Z=\{Z_t\}_{t\in[0,T]}$ is given as the solution to the SDE:
\begin{equation}\label{SDE-Z} dZ_t=-\lambda_tZ_{t-}dM_t, \ \ \ Z_0=1. \end{equation}
As seen in Lemma 2.7 of Arai and Suzuki \cite{AS}, when $Z$ is a positive square integrable martingale and Assumption \ref{ass-SC} is satisfied,
the MMM $\bbP^*$ exists with $d\bbP^*=Z_Td\bbP$. Then, for every $t\in[0,T]$, $Z_t$ satisfies
\[ Z_t=\bbE\l[\frac{d\bbP^*}{d\bbP}\Big|\calF_t\r]. \]
In \cite{AS}, for the case where $S$ is given by the solution to an SDE driven by a L\'evy process,
they obtained an explicit representation of the LRM strategy for claim $F$ by using the martingale representation for the product $FZ_T$.
In this paper, we extend \cite{AS}'s result to exponential additive models, but the mathematical treatment is more complicated because the L\'evy measure is no longer stationary,
i.e., time-dependent.

%
%
\section{Exponential additive processes}\setcounter{equation}{0}
In this section, we consider models in which $S=\{S_t\}_{0\leq t\leq T}$ follows an exponential additive process, that is,
the log-price process $L=\{L_t\}_{t\in[0,T]}$ defined as $L_t:=\log(S_t/S_0)$ becomes an additive process.
In particular, we treat only the case where $L$ is an additive process without the Gaussian component.

Let $\{(0,\nu_t,\gamma_t)\}_{t\in[0,T]}$ be the system of generating triplets for the log-price process $L$,
where $\gamma_t$ is a continuous function on $[0,T]$ with $\gamma_0=0$,
and $\{\nu_t\}_{t\in[0,T]}$ is a non-decreasing continuous sequence of L\'evy measures on $\bbR_0$ with $\nu_0\equiv0$.
Here, $\bbR_0:=\bbR\backslash\{0\}$. More precisely, $\{\nu_t\}_{t\in[0,T]}$ satisfies the following conditions:
\begin{enumerate}
\item For each $t\in[0,T]$, $\nu_t$ is a $\sigma$-finite measure such that $\d{\int_{\bbR_0}(1\wedge x^2)\nu_t(dx)<\infty}$.
\item (Non-decreasing property) For any $0\leq s\leq t\leq T$ and any $B\in\calB(\bbR_0)$, $\nu_s(B)\leq\nu_t(B)$, where $\calB(\bbR_0)$ is the Borel algebra on $\bbR_0$.
\item (Continuity) As $s\to t\in[0,T]$, $\nu_s(B)$ converges to $\nu_t(B)$ for any $B\in\calB(\bbR_0)$ with $B\subset\{x:|x|>\ve\}$ for some $\ve>0$.
\end{enumerate}
According to Theorem 9.8, Remark 9.9, and Theorem 11.5 of Sato \cite{S}, there exists an additive process $L$ satisfying the above conditions,
taking a modification as necessary. For every $t\in[0,T]$, $L_t$ has an infinitely divisible law with the characteristic function
\[ \bbE\l[e^{iuL_t}\r]=\exp\l\{iu\gamma_t+\int_{\bbR_0}\l(e^{iux}-1-iux{\bf 1}_{\{|x|\leq1\}}\r)\nu_t(dx)\r\}, \ \ \ u\in\bbR. \]

We denote by $N$ the Poisson random measure associated with $L$, that is, 
\[ N(G):=\#\{t \ | \ (t,\Delta L_t)\in G\}, \ \ \ G\in\calB([0,T]\times\bbR_0), \]
where $\Delta L_t:=L_t-L_{t-}$. Then, its compensated measure $\olnu$ is defined as a measure on $([0,T]\times\bbR_0,\calB([0,T]\times\bbR_0))$ satisfying
$\olnu(G):=\bbE[N(G)]$ for any $G\in\calB([0,T]\times\bbR_0)$. We denote $\tN(G):=N(G)-\olnu(G)$.
Note that $\olnu$ satisfies $\nu_t(A)=\olnu([0,t],A)$ and $\olnu(\{t\},A)=0$ for any $t\in[0,T]$ and any $A\in\calB(\bbR_0)$.
Using $N$ and $\tN$, we can describe $L_t$ as
\[ L_t=\gamma_t+\int_0^t\int_{|x|>1}xN(ds,dx)+\int_0^t\int_{|x|\leq1}x\tN(ds,dx). \]
Here, taking into account Proposition 3.1 of Goutte et al. \cite{GOR}, we define the cumulant generating function $\kappa_t$ as
\begin{equation}\label{eq-kappa} \kappa_t(u):=\log \bbE[e^{uL_t}]=u\gamma_t+\int_0^t\int_{\bbR_0}\l(e^{ux}-1-ux{\bf 1}_{\{|x|\leq1\}}\r)\olnu(ds,dx) \end{equation}
for any $u\in\bbR$ satisfying $\d{\int_0^T\int_{|x|>1}e^{ux}\olnu(ds,dx)<\infty}$.

Now, we assume the following:
\begin{description} 
 \item[(A1)] $\gamma_t$ is differentiable on $(0,T]$, that is, it is expressed as $\d{\gamma_t=\int_0^t\gamma^\prime_sds}$ for any $t\in(0,T]$. 
 \item[(A2)] The L\'evy measure $\olnu$ is absolutely continuous with respect to $dt\times dx$, that is,
       there is a non-negative function $\pi$ such that $\olnu(dt,dx)=\pi(t,x)dtdx$ for any $t\in(0,T]$ and any $x\in\bbR_0$.
 \item[(A3)] $\d{\int_0^T\int_{|x|>1}e^{4x}\olnu(ds,dx)<\infty}$.
\end{description} 

\noindent
We prove one lemma as follows:
\begin{lem}\label{lem1}
Under Assumption (A3), $\d{\int_0^T\int_{\bbR_0}(e^x-1)^2\olnu(dt,dx)<\infty}$ holds.
\end{lem}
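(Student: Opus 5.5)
Split the domain of integration $[0,T]\times\bbR_0$ into the small-jump region $\{|x|\leq1\}$ and the large-jump region $\{|x|>1\}$, and bound $(e^x-1)^2$ separately on each.

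On $\{|x|\leq1\}$ use the mean value theorem: $|e^x-1|\leq e|x|$ for $|x|\leq1$, so $(e^x-1)^2\leq e^2x^2$. Then
\[ \int_0^T\int_{|x|\leq1}(e^x-1)^2\olnu(dt,dx)\leq e^2\int_{|x|\leq1}x^2\nu_T(dx)<\infty. \]
This uses $\nu_T(A)=\olnu([0,T],A)$ together with the integrability condition $\int_{\bbR_0}(1\wedge x^2)\nu_T(dx)<\infty$ imposed on each $\nu_t$.

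On $\{|x|>1\}$, use $(e^x-1)^2\leq 2e^{2x}+2$. The constant part gives $2\nu_T(\{|x|>1\})$, which is finite again by $\int(1\wedge x^2)\nu_T(dx)<\infty$. For the exponential part, split further.
\begin{itemize}
\item For $x<-1$ we have $e^{2x}\leq1$, so this piece is bounded by $\nu_T(\{|x|>1\})<\infty$.
\item For $x>1$ we have $e^{2x}\leq e^{4x}$, so (A3) gives $\int_0^T\int_{x>1}e^{2x}\olnu(dt,dx)\leq\int_0^T\int_{|x|>1}e^{4x}\olnu(dt,dx)<\infty$.
\end{itemize}

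Summing the pieces gives the claim. None of the steps is a real obstacle. The only point needing care is that $\olnu([0,T]\times\cdot)=\nu_T$, which converts the space-time integrals into integrals against the single L\'evy measure $\nu_T$, where the standing assumption on L\'evy measures applies.
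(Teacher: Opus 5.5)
Your proof is correct and follows essentially the same route as the paper: handle $x>1$ via $e^{2x}\leq e^{4x}$ and (A3), and control $x<-1$ (by boundedness) and $|x|\leq1$ (by an $x^2$ bound) using $\int(1\wedge x^2)\nu_T(dx)<\infty$. The paper uses four regions with the slightly sharper pointwise bounds $(e^x-1)^2\leq e^{2x}$ on $x>1$ and $(e^x-1)^2\leq1$ on $x<-1$ instead of your $2e^{2x}+2$, which is only a cosmetic difference.
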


\proof
First of all, (A3) implies that
\[ \int_0^T\int_1^\infty(e^x-1)^2\olnu(dt,dx) \leq \int_0^T\int_1^\infty e^{2x}\olnu(dt,dx) \leq \int_0^T\int_1^\infty e^{4x}\olnu(dt,dx) < \infty. \]
Next, since $\d{\int_0^T\int_{\bbR_0}(1\wedge x^2)\olnu(dt,dx) <\infty}$ holds, we have
\[ \int_0^T\int_{-\infty}^{-1}(e^x-1)^2\olnu(dt,dx) \leq \int_0^T\int_{-\infty}^{-1}\olnu(dt,dx) < \infty, \]
\[ \int_0^T\int_0^1(e^x-1)^2\olnu(dt,dx) \leq (e-1)^2\int_0^T\int_0^1x^2\olnu(dt,dx) < \infty, \]
and 
\[ \int_0^T\int_{-1}^0(e^x-1)^2\olnu(dt,dx) \leq \int_0^T\int_{-1}^0x^2\olnu(dt,dx) < \infty. \]
This completes the proof of Lemma \ref{lem1}.
\fin

Under the above setting, the asset price process $S$ is given by the solution to the following SDE:
\begin{equation}\label{SDE-S} dS_t=S_{t-}\l\{\mu^S_tdt+\int_{\bbR_0}\l(e^x-1\r)\tN(dt,dx)\r\}, \ \ \ S_0>0, \end{equation}
where
\begin{equation}\label{eq-muS} \mu^S_t := \gamma^\prime_t+\int_{\bbR_0}\l(e^x-1-x{\bf 1}_{\{|x|\leq1\}}\r)\pi(t,x)dx = \frac{\partial \kappa_t(1)}{\partial t}. \end{equation}
From the view of Lemma \ref{lem1}, we can show that $\mu^S_t$ is well-defined under (A1)--(A3), since $e^x-1-x\leq 2x^2$ holds when $|x|\leq1$.
By the same argument as Proposition 8.20 in Cont and Tankov \cite{CT}, we obtain that $S$, the solution to (\ref{SDE-S}), satisfies $S_t=S_0e^{L_t}$, that is,
\begin{equation}\label{eq-S} S_t = S_0\exp\l\{\gamma_t+\int_0^t\int_{|x|>1}xN(ds,dx)+\int_0^t\int_{|x|\leq1}x\tN(ds,dx)\r\}. \end{equation}

Next, we introduce a sufficient condition under which $S$ satisfies Assumption \ref{ass-SC}. To this end, we denote
\begin{equation}\label{eq-Sigma} \Sigma_t := \int_{\bbR_0}(e^x-1)^2\pi(t,x)dx = \frac{\partial \kappa_t(2)}{\partial t}-2\frac{\partial \kappa_t(1)}{\partial t} \end{equation}
for $t\in(0,T]$, and assume that
\begin{description} 
 \item[(A4)] $0\geq \mu^S_t> -\Sigma_t$ for any $t\in(0,T]$.
\end{description} 
Then, we can see the following:

\begin{prop}\label{prop-SC}
Under Assumptions (A1)--(A4), $S$ satisfies Assumption \ref{ass-SC}.
\end{prop}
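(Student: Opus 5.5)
From (\ref{SDE-S}) we read off the canonical decomposition $S_t=S_0+M_t+A_t$ with
\[ M_t=\int_0^t\int_{\bbR_0}S_{s-}(e^x-1)\tN(ds,dx), \qquad A_t=\int_0^tS_{s-}\mu^S_sds. \]
Then $\langle M\rangle_t=\int_0^tS_{s-}^2\Sigma_sds$. The natural candidate for (SC2) is $\lambda_t:=\mu^S_t/(S_{t-}\Sigma_t)$, which is predictable because $S_{t-}$ is left continuous and $\mu^S,\Sigma$ are deterministic. The plan has three steps.

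\textbf{Step 1: moment bounds for $S$.} We show $\bbE[\sup_{t\le T}S_t^2]<\infty$. Since $S_t=S_0e^{L_t}$, (A3) together with (\ref{eq-kappa}) gives
\[ \bbE[S_t^4]=S_0^4e^{\kappa_t(4)}, \]
which is finite and bounded in $t\in[0,T]$. For the supremum, we write $S^2$ through Itô's formula, or observe that $e^{L_t-\kappa_t(1)}$ is a martingale, and then apply Doob's inequality. Bounding $e^{-\kappa_t(1)}$ above and below on $[0,T]$ by continuity yields the claim.

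\textbf{Step 2: (SC1).} We need
\[ \bbE[[M]_T]=\bbE[\langle M\rangle_T]=\int_0^T\bbE[S_{s}^2]\Sigma_sds\le \sup_s\bbE[S_s^2]\int_0^T\Sigma_sds, \]
and $\int_0^T\Sigma_sds<\infty$ by Lemma \ref{lem1}. For the drift,
\[ \bbE\Big[\Big(\int_0^T|dA_t|\Big)^2\Big]\le \bbE\Big[\sup_t S_t^2\Big]\Big(\int_0^T|\mu^S_s|ds\Big)^2. \]
By (A4), $|\mu^S_s|<\Sigma_s$, so this is finite by Step 1 and Lemma \ref{lem1}. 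The role of (A4) here is to control $\mu^S$ by the integrable function $\Sigma$.

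\textbf{Step 3: (SC2) and (SC3).} We have $A_t=\int_0^t\lambda_sS_{s-}^2\Sigma_sds=\int_0^t\lambda_sd\langle M\rangle_s$. (A4) gives $\Sigma_t>0$ for $t\in(0,T]$, so $\lambda$ is well defined. Next,
\[ K_t=\int_0^t\frac{(\mu^S_s)^2}{\Sigma_s}ds, \]
which is deterministic and continuous in $t$ because it is the integral of a nonnegative function against $ds$. By (A4), $(\mu^S_s)^2/\Sigma_s<|\mu^S_s|<\Sigma_s$, so $K_T\le\int_0^T\Sigma_sds<\infty$ by Lemma \ref{lem1}.

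\textbf{Main obstacle.} Step 1 is the main technical work: justifying that $M$ is a true square integrable martingale, so that $\bbE[[M]_T]=\bbE[\langle M\rangle_T]$, and handling the supremum. The first thing to try is Doob's $L^2$ inequality applied to the martingale $S_te^{-\kappa_t(1)}$, whose second moment is $S_0^2e^{\kappa_t(2)-2\kappa_t(1)}$. Finiteness of $\kappa_t(2)$ follows from (A3), and $\kappa_t(1)$ is continuous hence bounded on $[0,T]$. A minor measurability point is that $\mu^S$ and $\Sigma$ are defined only for a.e. $t$ via (A2); this is harmless because every quantity appears inside a $dt$-integral.
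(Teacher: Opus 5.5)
Your proposal is correct and follows the paper's proof essentially step by step. Both arguments apply Doob's inequality to the martingale $e^{L_t-\kappa_t(1)}$ to get $\bbE[\sup_t S_t^2]<\infty$, bound the drift using $|\mu^S_t|<\Sigma_t$ from (A4) together with Lemma \ref{lem1}, and take the same $\lambda_t=\mu^S_t/(S_{t-}\Sigma_t)$ to obtain (SC2) and (SC3).

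The only deviations are minor, and both are valid:
\begin{itemize}
\item You get $\bbE[[M]_T]$ directly from $\bbE[[M]_T]=\bbE[\langle M\rangle_T]=\int_0^T\bbE[S_s^2]\Sigma_s\,ds$, where the paper uses Burkholder--Davis--Gundy. This identity holds in $[0,\infty]$ by localization and monotone convergence, so the ``main obstacle'' you flag does not actually require knowing beforehand that $M$ is a true martingale.
\item You bound $K_T$ by $\int_0^T\Sigma_t\,dt$ rather than by $-\int_0^T\mu^S_t\,dt$. Here the inequality $(\mu^S_s)^2/\Sigma_s<|\mu^S_s|$ should read $\le$, since both sides vanish when $\mu^S_s=0$.
\end{itemize}
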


\proof
For each $j=1,2$, $X^{(j)}_t:=\exp\{jL_t-\kappa_t(j)\}$ forms a martingale by (A3) and Remark 3.3 in \cite{GOR}. 
Doob's inequalty implies that
\begin{align*}
\bbE\l[\sup_{t\in[0,T]}|S_t|^2\r]
&= S^2_0\bbE\l[\sup_{t\in[0,T]}e^{2L_t}\r] = S^2_0\bbE\l[\sup_{t\in[0,T]}\l|X^{(1)}_te^{\kappa_t(1)}\r|^2\r] \\
&\leq 4S^2_0\sup_{t\in[0,T]}e^{2\kappa_t(1)}\bbE\l[\l|X^{(1)}_T\r|^2\r].
\end{align*}
Note that $\kappa_t(1)$ is a continuous function in $t$ by Proposition 3.6 of \cite{GOR}. Thus, $\sup_{t\in[0,T]}e^{2\kappa_t(1)}$ is finite.
Moreover, we have $\l|X^{(1)}_T\r|^2=X^{(2)}_T\exp\{\kappa_T(2)-2\kappa_T(1)\}$. Thus,  
\[ \bbE\l[\sup_{t\in[0,T]}|S_t|^2\r]\leq4S^2_0\sup_{t\in[0,T]}e^{2\kappa_t(1)}\bbE\l[X^{(2)}_T\r]\exp\{\kappa_T(2)-2\kappa_T(1)\}<\infty. \]
Note that $\kappa_T(2)\in\bbR$ under (A3).

We can see that (SC1) in Assumption \ref{ass-SC} is satisfied by the same sort of argument as Example 2.8 of \cite{AS}.
In fact, Lemma \ref{lem1} and (A4), together with the square integrability of $S$ shown above, ensure that
\begin{align*}
\bbE\l[\l(\int_0^T|dA_t|\r)^2\r] &\leq \bbE\l[\l(\int_0^T S_{t-}|\mu^S_t|dt\r)^2\r] \leq \bbE\l[\sup_{t\in[0,T]}|S_t|^2\r]\l(\int_0^T|\mu^S_t|dt\r)^2 \\
&\leq \bbE\l[\sup_{t\in[0,T]}|S_t|^2\r]\l(\int_0^T\Sigma_tdt\r)^2 <\infty.
\end{align*}
Furthermore, the Burkholder-Davis-Gundy inequality (e.g., see Theorem 48 in Chapter IV in Protter \cite{Protter}) implies that there exists a constant $C$ such that
\[ \bbE[[M]_T] \leq C\bbE\l[\sup_{t\in[0,T]}|M_t|^2\r] \leq C\l\{|S_0|^2+\bbE\l[\sup_{t\in[0,T]}|S_t|^2\r]+\bbE\l[\sup_{t\in[0,T]}|A_t|^2\r]\r\}<\infty. \]

Next, we focus on (SC2) and (SC3). We can see (SC2) by taking $\lambda$ as
\begin{equation}\label{eq-lambda} \lambda_t=\frac{\mu^S_t}{S_{t-}\Sigma_t}, \ \ \ t\in(0,T]. \end{equation}
Note that this is well-defined from the view of Lemma \ref{lem1}. Furthermore, (A4) implies that
\begin{align*}
\int_0^T\lambda^2_td\langle M\rangle_t 
&=    \int_0^T\frac{\l(\mu^S_t\r)^2}{\Sigma_t}dt \leq \int_0^T|\mu^S_t|dt = -\int_0^T\mu^S_tdt\\
&\leq |\gamma_T|+\l|\int_0^T\int_{\bbR_0}\l(e^x-1-x{\bf 1}_{\{|x|\leq1\}}\r)\pi(t,x)dxdt\r| < \infty.
\end{align*}
It is obvious that the process $K$ is continuous. As a result, (SC3) follows.
\fin

\begin{rem}
We can show Lemma \ref{lem1} and Proposition \ref{prop-SC} even if we weaken (A3) to $\d{\int_0^T\int_{|x|>1}e^{2x}\olnu(ds,dx)<\infty}$.\vspace{2mm}
\end{rem}

\vspace{2mm}
Next, we investigate the stochastic process $Z$ defined as the solution to the SDE (\ref{SDE-Z}).
For $t\in[0,T]$ and $x\in\bbR_0$, denote
\[ \theta_{t,x}:=\lambda_tS_{t-}(e^x-1)=\frac{\mu^S_t(e^x-1)}{\Sigma_t}, \]
where $\lambda_t$ is defined in (\ref{eq-lambda}). We can then rewrite the SDE (\ref{SDE-Z}) as $dZ_t = Z_{t-}dM^\lambda_t$, where
\begin{equation}\label{eq-Mlambda}
M^\lambda_t=-\int_0^t\int_{\bbR_0}\theta_{s,x}\tN(ds,dx).
\end{equation}
Here, we show that $Z$ is positive and square integrable. To this end, we add two more assumptions and prepare three lemmas as follows:
\begin{description} 
 \item[(A5)] $\d{\int_0^T\int_{-\infty}^{-1}|\log(1-\theta_{t,x})|^2\olnu(dt,dx)<\infty}$.
 \item[(A6)] $\d{\int_0^T\int_{0<|x|\leq1}|e^x-1|\olnu(ds,dx)<\infty}$, equivalently, $\d{\int_0^T\int_{0<|x|\leq1}|x|\olnu(ds,dx)<\infty}$.
\end{description}

\begin{lem}\label{lem2}
Assuming (A3), we have $\d{\int_0^T\int_{\bbR_0}|e^x-1|^k\olnu(dt,dx)<\infty}$ for $k=3,4$.
\end{lem}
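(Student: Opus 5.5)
The plan is to follow the proof of Lemma \ref{lem1} and split $\bbR_0$ into four regions: $(1,\infty)$, $(-\infty,-1]$, $(0,1]$ and $[-1,0)$. On each region we bound $|e^x-1|^k$, for $k=3,4$, by a quantity whose $\olnu$-integral is already known to be finite. This uses either (A3) or the standing integrability $\int_0^T\int_{\bbR_0}(1\wedge x^2)\olnu(dt,dx)<\infty$, which holds because $\nu_T$ is a L\'evy measure and $\olnu([0,T],A)=\nu_T(A)$.

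\textbf{Large positive jumps.} For $x>1$ we have $0<e^x-1\leq e^x$, so $|e^x-1|^k\leq e^{kx}\leq e^{4x}$ for $k=3,4$. Hence
\[ \int_0^T\int_1^\infty|e^x-1|^k\olnu(dt,dx)\leq\int_0^T\int_1^\infty e^{4x}\olnu(dt,dx)<\infty \]
by (A3). This is the only place where the exponential moment is needed, and it explains why (A3) asks for $e^{4x}$ rather than $e^{2x}$: the case $k=4$ requires exactly this exponent.

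\textbf{Negative jumps.} For $x\leq-1$ we have $|e^x-1|\leq1$, so $|e^x-1|^k\leq1$ and the integral is at most $\nu_T(\{x\leq-1\})<\infty$. For $-1\leq x<0$, the mean value theorem gives $|e^x-1|\leq|x|\leq1$, hence $|e^x-1|^k\leq x^2$.

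\textbf{Small positive jumps.} For $0<x\leq1$ we have $0<e^x-1\leq(e-1)x$ with $(e-1)x\leq e-1$. This gives $|e^x-1|^k\leq(e-1)^k x^2$.

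In both small-jump regions the integrand is therefore dominated by a constant times $x^2$ on $\{|x|\leq1\}$, which is $\olnu$-integrable over $[0,T]\times\{0<|x|\leq1\}$. Summing the four pieces proves the lemma.

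I expect no real obstacle here. The only step requiring care is the large positive jump region, which is why $e^{4x}$ in (A3) rather than $e^{2x}$ is needed for $k=4$. An alternative for $k=3$ would be to interpolate between the cases $k=2$ and $k=4$, but the direct bound above is simpler.
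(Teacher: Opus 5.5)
Your proposal is correct and is essentially the paper's proof. It uses the same four-region decomposition, with $|e^x-1|^k\le e^{4x}$ on the large positive jumps via (A3), $|e^x-1|^k\le 1$ for $x\le-1$, and $|e^x-1|^k$ bounded by a constant multiple of $x^2$ on $\{0<|x|\le1\}$.
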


\proof
For $k=3,4$, we have the following: $|e^x-1|^k\leq e^{4x}$ if $x\geq1$, and $|e^x-1|^k\leq1$ if $x\leq-1$.
Moreover, $|e^x-1|^k\leq (e-1)^kx^k\leq(e-1)^4x^2$ if $0<x\leq1$, and $|e^x-1|^k\leq |x|^k\leq x^2$ if $-1\leq x<0$.
Thus, (A3) implies that $\d{\int_0^T\int_{\bbR_0}|e^x-1|^k\olnu(dt,dx)<\infty}$ for $k=3,4$.
\fin

\begin{lem}\label{lem3}
Under Assumptions (A1)--(A5), $\d{\int_0^T\int_{\bbR_0}|\log(1-\theta_{t,x})|^2\olnu(dt,dx)<\infty}$ holds.
\end{lem}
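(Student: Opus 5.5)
The plan is to split the integration domain $[0,T]\times\bbR_0$ according to the sign and size of $x$. On $x\leq-1$ the statement is exactly (A5). On the other two regions I will bound $|\log(1-\theta_{t,x})|^2$ by a constant multiple of $(e^x-1)^2$, which is $\olnu$-integrable by Lemma \ref{lem1}.

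First I rewrite $\theta$ using (A4). Set $c_t:=-\mu^S_t/\Sigma_t$. Assumption (A4), $0\geq\mu^S_t>-\Sigma_t$, gives $c_t\in[0,1)$ for every $t\in(0,T]$. Then
\[ 1-\theta_{t,x}=1+c_t(e^x-1). \]
Since $e^x-1>-1$ and $c_t<1$, this quantity is strictly positive for every $x$. So the logarithm is well defined everywhere, including on the region $x\leq-1$ covered by (A5).

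\emph{Case $x>0$.} Here $e^x-1>0$, so $1-\theta_{t,x}\geq1$. The elementary inequality $0\leq\log(1+y)\leq y$ for $y\geq0$, together with $c_t\leq1$, yields
\[ |\log(1-\theta_{t,x})|^2\leq c_t^2(e^x-1)^2\leq(e^x-1)^2. \]

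\emph{Case $-1\leq x<0$.} Put $y:=c_t(e^x-1)$. Then $e^{-1}-1\leq e^x-1<0$, and since $0\leq c_t<1$, we get $y\in[e^{-1}-1,0]$, that is, $1+y\geq e^{-1}$. This lower bound holds uniformly in $t$, even if $c_t$ approaches $1$. For $y\in(-1,0]$ we have $|\log(1+y)|\leq|y|/(1+y)$, so $|\log(1+y)|\leq e|y|$. Hence
\[ |\log(1-\theta_{t,x})|^2\leq e^2(e^x-1)^2. \]

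Combining the two cases, the integral over $\{x>-1\}$ is bounded by $e^2\int_0^T\int_{\bbR_0}(e^x-1)^2\olnu(dt,dx)$, which is finite by Lemma \ref{lem1} (this uses (A3)). Adding the contribution from (A5) on $\{x\leq-1\}$ gives the claim.

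The only delicate point is the case $-1\leq x<0$. There the bound must not depend on $c_t$, which could be arbitrarily close to $1$. The restriction $x\geq-1$ is what keeps $1-\theta_{t,x}$ bounded away from zero uniformly, and this is why the region $x<-1$ has to be handled separately by (A5).
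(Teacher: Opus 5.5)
Your proof is correct and follows the same route as the paper: you split at $x=0$ and $x=-1$, use $0\le\log(1-\theta_{t,x})\le e^x-1$ for $x>0$, compare with $(e^x-1)^2$ on $[-1,0)$ via Lemma \ref{lem1}, and invoke (A5) on $x\le-1$. The only difference is on $[-1,0)$, where the paper uses concavity of $\theta\mapsto\log(1-\theta)$ (a chord bound through $\theta_{t,-1}$) instead of your inequality $|\log(1+y)|\le|y|/(1+y)$; your explicit uniform constant $e^2$ is actually more careful than the paper's display, which silently drops the factor $\bigl(|\log(1-\theta_{t,-1})|/(1-e^{-1})\bigr)^2\le (e/(e-1))^2$.
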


\proof
(A4) implies that $0\leq\log(1-\theta_{t,x})\leq(e^x-1)$ when $x>0$. Thus, we have
\[ \int_0^T\int_0^\infty|\log(1-\theta_{t,x})|^2\olnu(dt,dx) \leq \int_0^T\int_0^\infty(e^x-1)^2\olnu(dt,dx)<\infty \]
by Lemma \ref{lem1}. For $x\in[-1,0)$, we have
\[ 0 \geq \log(1-\theta_{t,x}) \geq \frac{\log(1-\theta_{t,-1})}{\theta_{t,-1}}\theta_{t,x} = \frac{\log(1-\theta_{t,-1})}{e^{-1}-1}(e^x-1), \]
which implies that
\[ \int_0^T\int_{-1}^0|\log(1-\theta_{t,x})|^2\olnu(dt,dx) \leq \int_0^T\int_{-1}^0\l(e^x-1\r)^2\olnu(dt,dx)<\infty. \]
Together with (A5), Lemma \ref{lem3} follows.
\fin

\begin{lem}\label{lem4}
Under Assumptions (A1)--(A6), $\d{\int_0^T\int_{\bbR_0}\l(|\theta_{t,x}|+|\log(1-\theta_{t,x})|\r)\olnu(dt,dx)<\infty}$.
\end{lem}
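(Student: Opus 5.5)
The plan is to bound the two integrands separately, splitting $\bbR_0$ into the regions $x>1$, $0<|x|\leq1$ and $x<-1$. The one structural fact used throughout is that (A4) gives $-1<\mu^S_t/\Sigma_t\leq 0$ for every $t\in(0,T]$. Hence
\[ |\theta_{t,x}|=\frac{|\mu^S_t|}{\Sigma_t}\,|e^x-1|\leq|e^x-1| \]
for all $(t,x)$, and $\theta_{t,x}\leq 0$ for $x>0$ while $\theta_{t,x}\in[0,1)$ for $x<0$. In particular $1-\theta_{t,x}>0$, so the logarithm is well defined.

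\textbf{The $\theta$-term.} Using $|\theta_{t,x}|\leq|e^x-1|$, I would bound each region as follows.
\begin{enumerate}
\item On $0<|x|\leq1$, the integral of $|e^x-1|$ is finite by (A6).
\item On $x>1$, we have $|e^x-1|\leq e^{4x}$, which is integrable by (A3).
\item On $x<-1$, we have $|e^x-1|\leq1$, and $\olnu([0,T]\times(-\infty,-1))<\infty$ because $\int_0^T\int_{\bbR_0}(1\wedge x^2)\olnu(dt,dx)<\infty$.
\end{enumerate}
This gives $\int_0^T\int_{\bbR_0}|\theta_{t,x}|\,\olnu(dt,dx)<\infty$.

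\textbf{The logarithmic term.} I would treat the three regions as follows.
\begin{enumerate}
\item For $x>0$, as in the proof of Lemma \ref{lem3}, $0\leq\log(1-\theta_{t,x})\leq-\theta_{t,x}\leq e^x-1$. The integral is then controlled by the $\theta$-bounds above, namely (A6) on $(0,1]$ and (A3) on $(1,\infty)$.
\item For $x\in[-1,0)$, I would reuse the concavity (chord) bound from Lemma \ref{lem3}:
\[ |\log(1-\theta_{t,x})|\leq\frac{|\log(1-\theta_{t,-1})|}{1-e^{-1}}\,|e^x-1|. \]
This needs a constant uniform in $t$. Since $\theta_{t,-1}=\frac{\mu^S_t}{\Sigma_t}(e^{-1}-1)\in[0,1-e^{-1})$, we get $|\log(1-\theta_{t,-1})|\leq-\log e^{-1}=1$, so the constant is at most $(1-e^{-1})^{-1}$. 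Integrability then follows from (A6).
\item For $x<-1$, I would apply the Cauchy--Schwarz inequality:
\[ \int_0^T\int_{-\infty}^{-1}|\log(1-\theta_{t,x})|\,\olnu(dt,dx)\leq\Bigl(\int_0^T\int_{-\infty}^{-1}|\log(1-\theta_{t,x})|^2\olnu(dt,dx)\Bigr)^{1/2}\olnu\bigl([0,T]\times(-\infty,-1)\bigr)^{1/2}. \]
The right-hand side is finite by (A5) together with the finiteness of the $\olnu$-mass of $\{x<-1\}$ noted above.
\end{enumerate}

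\textbf{Main obstacle.} The delicate region is $x<0$. There $\theta_{t,x}$ may approach $1$, since $\mu^S_t/\Sigma_t>-1$ holds only strictly and not uniformly in $t$, so $\log(1-\theta_{t,x})$ is not a priori bounded. This is exactly why (A5) is assumed for $x<-1$ and handled by Cauchy--Schwarz. On $[-1,0)$ the issue is resolved by the explicit uniform bound on the chord constant, which stays finite because $e^x-1\geq e^{-1}-1>-1$ there.
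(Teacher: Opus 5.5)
Your proof is correct and follows essentially the paper's route. You bound $|\theta_{t,x}|\le|e^x-1|$ via (A4) and use (A3), (A6) and the finite $\olnu$-mass of $\{x<-1\}$; then you treat the logarithm on $x\ge -1$ as in Lemma \ref{lem3} and on $x<-1$ via (A5), with only cosmetic differences: you use Cauchy--Schwarz where the paper uses $|y|\le 1\vee y^2$, and you make the uniform chord constant $(1-e^{-1})^{-1}$ explicit, which is slightly more careful than the constant $1$ implicitly used in the paper's Lemma \ref{lem3}.
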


\proof
First, (A6), together with (A3), ensures that $\d{\int_0^T\int_{\bbR_0}|e^x-1|\olnu(ds,dx)<\infty}$.
Since $|e^x-1|\leq(e-1)|x|$ when $|x|\leq1$, $|\theta_{t,x}|$ is integrable with respect to $\olnu$ by (A4).
By the same way as the proof of Lemma \ref{lem3}, we can see the integrability condition for $|\log(1-\theta_{t,x})|$.
In fact, (A5), together with $\d{\int_0^T\int_{-\infty}^{-1}\olnu(dt,dx)<\infty}$, yields that
\[ \int_0^T\int_{-\infty}^{-1}|\log(1-\theta_{t,x})|\olnu(dt,dx) \leq \int_0^T\int_{-\infty}^{-1}\l(1\vee|\log(1-\theta_{t,x})|^2\r)\olnu(dt,dx)<\infty. \]
\fin

\begin{rem}\label{rem-A5}
\begin{enumerate}
 \item The condition
       \begin{equation}\label{cond-A5} \int_0^T\int_{-\infty}^{-1}\l\{\log\l(1+\frac{\mu^S_t}{\Sigma_t}\r)\r\}^2\olnu(ds,dx)<\infty \end{equation}
       is a sufficient condition for (A5). In fact, when $x<-1$, we have
       \[ 0 \geq \log(1-\theta_{t,x}) \geq \log\l(1+\frac{\mu^S_t}{\Sigma_t}\r), \]
       from which (\ref{cond-A5}) implies that 
       \[ \int_0^T\int_{-\infty}^{-1}|\log(1-\theta_{t,x})|^2\olnu(dt,dx) \leq \int_0^T\int_{-\infty}^{-1}\l\{\log\l(1+\frac{\mu^S_t}{\Sigma_t}\r)\r\}^2\olnu(ds,dx)<\infty. \]
 \item When the log-price process $L$ is a L\'evy process, $\mu^S_t$ and $\Sigma_t$ do not depend on $t$.
       Thus, it is possible to simplify (A5) as
       \begin{equation}\label{cond-AS} \frac{\mu^S_t}{\Sigma_t} > -1+\ve \end{equation}
       for some $\ve>0$, e.g., see Arai et al. \cite{AIS}; however, (\ref{cond-AS}) is too restrictive for the additive process case.
\end{enumerate}
\end{rem}

With the above preparations, we prove the following proposition:

\begin{prop}\label{prop-Z}
Suppose Assumptions (A1)--(A6). Then, $Z$ is a positive square integrable martingale.
\end{prop}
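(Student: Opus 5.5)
We will work with the explicit Doléans-Dade formula for $Z=\calE(M^\lambda)$. Since $M^\lambda_t=-\int_0^t\int_{\bbR_0}\theta_{s,x}\tN(ds,dx)$ has no continuous martingale part and its jumps are $\Delta M^\lambda_t=-\theta_{t,\Delta L_t}$, we will write
\[ Z_t=\exp\l\{-\int_0^t\int_{\bbR_0}\theta_{s,x}\tN(ds,dx)\r\}\prod_{s\leq t}(1-\theta_{s,\Delta L_s})e^{\theta_{s,\Delta L_s}}. \]
Using Lemma \ref{lem4}, both $\theta$ and $\log(1-\theta)$ are $\olnu$-integrable, so the compensated and uncompensated integrals can be split. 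This gives
\[ Z_t=\exp\l\{\int_0^t\int_{\bbR_0}\log(1-\theta_{s,x})N(ds,dx)+\int_0^t\int_{\bbR_0}\theta_{s,x}\olnu(ds,dx)\r\}. \]
The expression is well defined as long as $1-\theta_{t,x}>0$ for all $t,x$.

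\textbf{Positivity.} By (A4) we have $\mu^S_t\le0$ and $\mu^S_t/\Sigma_t>-1$. For $x>0$ this gives $\theta_{t,x}\le0$. For $x<0$ we have $e^x-1\in(-1,0)$, so $\theta_{t,x}=\frac{\mu^S_t}{\Sigma_t}(e^x-1)\in[0,1)$. Hence $1-\theta_{t,x}>0$ everywhere, and $Z_t>0$ follows from the exponential form above.

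\textbf{Martingale property and square integrability.} We will prove both together by computing $Z_t^2$. Since $\theta_{s,x}$ is deterministic, $\log Z_t$ is a deterministic drift plus a Poisson integral of the deterministic function $\log(1-\theta)$. Hence $Z$ has independent increments, and $\bbE[Z_t^p]$ can be computed with the exponential formula for Poisson random measures. This is legitimate provided $\int_0^T\int_{\bbR_0}|(1-\theta)^p-1|\,d\olnu<\infty$. Concretely, we will write
\[ Z_t^2=\calE_t(M^\lambda)^2=\calE_t\l(2M^\lambda+[M^\lambda]\r). \]
Splitting $[M^\lambda]$ into its compensator and a martingale part, this becomes
\[ Z_t^2=\calE_t(\tilde M)\exp\l\{\int_0^t\int_{\bbR_0}\theta^2_{s,x}\olnu(ds,dx)\r\}, \]
where $\tilde M=\int\!\!\int(\theta^2-2\theta)\,d\tN$, and the exponential factor is finite.

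Finiteness of that factor comes from the bound $|\theta_{t,x}|\le|e^x-1|$, which follows from (A4), combined with Lemma \ref{lem1}. Similarly, $(\theta^2-2\theta)$ is $\olnu$-integrable by Lemma \ref{lem4} and Lemma \ref{lem1}, and $1+\theta^2-2\theta=(1-\theta)^2>0$. Therefore $\calE(\tilde M)$ is a positive process whose log is again a deterministic drift plus a Poisson integral. The exponential formula (or the independent increments) gives $\bbE[\calE_t(\tilde M)]=1$ for all $t$, so it is a true martingale. Hence $\sup_t\bbE[Z_t^2]\le\exp\{\int_0^T\int\theta^2d\olnu\}<\infty$.

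In particular $Z$ is a positive local martingale, hence a supermartingale, and it is bounded in $L^2$. By Doob's inequality, $\bbE[\sup_t Z_t^2]<\infty$. The localizing sequence is then dominated by an integrable variable, so $Z$ is a true square integrable martingale. Alternatively, $\bbE[Z_t]=1$ follows directly from the exponential formula: $\int(1-\theta-1+\theta)\,d\olnu=0$.

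\textbf{Main obstacle.} The delicate step is justifying the exponential-formula computations, namely that $\bbE[\exp\{\int\!\!\int f\,dN\}]=\exp\{\int\!\!\int(e^f-1)\,d\olnu\}$ for $f=\log(1-\theta)$ and $f=2\log(1-\theta)$. This requires $e^f-1\in L^1(\olnu)$, i.e.\ $\theta$ and $\theta^2-2\theta$ integrable. This is exactly where (A6) and Lemma \ref{lem4} enter, because without (A6) the uncompensated splitting fails near $x=0$. Lemma \ref{lem3} would be used instead to control the $L^2$ norm of the logarithmic integrand if one prefers to argue via $\bbE[\sup Z^2]$ directly.
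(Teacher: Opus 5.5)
Your argument is correct, but it reaches the conclusion by a different mechanism than the paper.

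\textbf{The paper's route.} The paper argues through quadratic variations and a cited criterion. It notes $\bbE[[M^\lambda]_T]=\int_0^T\int_{\bbR_0}\theta^2_{t,x}\,\olnu(dt,dx)<\infty$, so $M^\lambda$ is a uniformly integrable martingale. It then invokes Theorem 2 of H. Sato to conclude that $Z=\calE(M^\lambda)$ is a positive uniformly integrable martingale. For square integrability it uses the same factorization $Z^2=\calE(\olM^\theta)\exp\{\int\int\theta^2\,d\olnu\}$ that you use. It applies the same theorem to $\calE(\olM^\theta)$ after checking $\bbE[[\olM^\theta]_T]=\int\int(4\theta^2-4\theta^3+\theta^4)\,d\olnu<\infty$; this is where Lemma 2 is needed.

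\textbf{Your route.} You replace the cited theorem by Campbell's exponential formula for the Poisson random measure, with the deterministic integrands $\log(1-\theta)$ and $2\log(1-\theta)$. This is self-contained and gives the explicit value $\bbE[Z_t^2]=\exp\{\int_0^t\int_{\bbR_0}\theta^2\,d\olnu\}$. It needs only first-order integrability: $\theta,\theta^2\in L^1(\olnu)$ and $\int(1\wedge|\log(1-\theta)|)\,d\olnu<\infty$. So Lemma 2 is not needed.

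In fact, if you verify the last condition directly rather than quoting Lemma 4, you do not need (A5) either. The measure $\olnu$ is finite on $[0,T]\times\{|x|>1\}$. For $|x|\le1$, (A4) gives $|\log(1-\theta_{t,x})|\le e|e^x-1|$, which is integrable by (A6). The price is that your argument relies explicitly on $\theta$ being deterministic.

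\textbf{One step to repair.} Doob's $L^2$ maximal inequality is not available for a process you only know to be a supermartingale. So ``bounded in $L^2$, hence $\bbE[\sup_tZ_t^2]<\infty$'' is not justified at that point. It is also unnecessary. Your alternative already gives the martingale property: a positive local martingale is a supermartingale, and $\bbE[Z_T]=1=Z_0$ forces it to be a martingale. Independence of increments works too, via $\bbE[Z_t\,|\,\calF_s]=Z_s\bbE[Z_t/Z_s]=Z_s$. Then $\bbE[Z_T^2]<\infty$ follows from the exponential formula with $f=2\log(1-\theta)$. You could also rescue the Doob step by optional stopping of the true martingale $\calE(\tilde M)$ along a localizing sequence of $Z$, but this is not needed.
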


\proof
(A4) ensures that $\theta_{t,x}<1$, that is, $M^\lambda$ defined in (\ref{eq-Mlambda}) satisfies that $\Delta M^\lambda_t>-1$ for any $t\in[0,T]$.
Furthermore, Lemma \ref{lem1} ensures that
\[
\bbE\l[[M^\lambda]_T\r] = \bbE\l[\int_0^T\int_{\bbR_0}\theta_{t,x}^2N(dt,dx)\r] = \int_0^T\int_{\bbR_0}\theta_{t,x}^2\olnu(dt,dx) < \infty,
\]
from which $M^\lambda$ is a uniformly integrable martingale by Corollary 3 of Theorem 26 in Chapter II of \cite{Protter}.
Thus, Theorem 2 of H. Sato \cite{HSato} implies that $Z=\calE(M^\lambda)$ is a positive uniformly integrable martingale, and it is expressed as
\begin{equation}\label{eq-prop-Z-1}
Z_t =\exp\l\{\int_0^t\int_{\bbR_0}\log(1-\theta_{s,x})\tN(ds,dx)+\int_0^t\int_{\bbR_0}\l(\log(1-\theta_{s,x})+\theta_{s,x}\r)\olnu(ds,dx)\r\}.
\end{equation}
Note that the two integrals in (\ref{eq-prop-Z-1}) are well-defined from the view of Lemma \ref{lem4}. We have then
\begin{align*}
Z^2_t
&= \exp\l\{2\int_0^t\int_{\bbR_0}\log(1-\theta_{s,x})\tN(ds,dx)+2\int_0^t\int_{\bbR_0}\l(\log(1-\theta_{s,x})+\theta_{s,x}\r)\olnu(ds,dx)\r\} \\
&= \exp\Bigg\{\int_0^t\int_{\bbR_0}\log(1-2\theta_{s,x}+\theta^2_{s,x})\tN(ds,dx) \\
&\hspace{5mm} +\int_0^t\int_{\bbR_0}\l(\log(1-2\theta_{s,x}+\theta^2_{s,x})+2\theta_{s,x}-\theta^2_{s,x}\r)\olnu(ds,dx)
              +\int_0^t\int_{\bbR_0}\theta^2_{s,x}\olnu(ds,dx)\Bigg\} \\
&= \calE_t\l(\olM^\theta\r)\exp\l\{\int_0^t\int_{\bbR_0}\theta^2_{s,x}\olnu(ds,dx)\r\},
\end{align*}
where $\olM^\theta=\{\olM^\theta_t\}_{t\in[0,T]}$ is the process defined as
\[
\olM^\theta_t:=\int_0^t\int_{\bbR_0}\l(-2\theta_{s,x}+\theta^2_{s,x}\r)\tN(ds,dx), \ \ \ t\in[0,T].
\]
It is enough to see the integrability of $\calE_t\l(\olM^\theta\r)$, since $\d{\int_0^t\int_{\bbR_0}\theta^2_{s,x}\olnu(ds,dx)<\infty}$.
To this end, we calculate the expectation of the quadratic variation of $\olM^\theta$.
From the view of Lemmas \ref{lem1} and \ref{lem2}, we obtain
\begin{align*}
\bbE\l[\l[\olM^\theta\r]_T\r]
&= \bbE\l[\int_0^T\int_{\bbR_0}\l(-2\theta_{t,x}+\theta^2_{t,x}\r)^2N(dt,dx)\r] \\
&= \bbE\l[\int_0^T\int_{\bbR_0}\l(4\theta^2_{t,x}-4\theta^3_{t,x}+\theta^4_{t,x}\r)N(dt,dx)\r] \\
&= \int_0^T\int_{\bbR_0}\l(4\theta^2_{t,x}-4\theta^3_{t,x}+\theta^4_{t,x}\r)\olnu(dt,dx)<\infty,
\end{align*}
which implies that $\olM^\theta$ is a uniformly integrable martingale.
Using Theorem 2 of \cite{HSato} again, we obtain that $\calE\l(\olM^\theta\r)$ is a uniformly integrable martingale.
\fin

\noindent
From the view of the discussion so far, Assumptions (A1)--(A6) implies that the product process $ZS$ is a martingale,
and the MMM $\bbP^*$ exists with the Radon-Nikodym density $\d{\frac{d\bbP^*}{d\bbP}=Z_T}$.

\begin{rem}\label{rem-Situ}
\cite{AS} used Theorem 117 of Situ \cite{Situ} to show the square integrabilities of $S$ and $Z$ for the case where $\olnu$ is stationary.
However, Theorem 117 in \cite{Situ} is not available for the non-stationary case. That is why we take a different approach in this paper.
\end{rem}

%
%
\section{LRM strategies for exponential additive processes}\setcounter{equation}{0}
The objectives of this section are to derive a mathematical expression of LRM strategies for exponential additive models
and evolve it to a numerically tractable form.
Throughout this section, $S$ denotes the exponential additive process defined in (\ref{SDE-S}) or, equivalently, (\ref{eq-S}),
and we consider call options as the claims to be hedged. We denote $F:=(S_T-K)^+$, where $K>0$ is the strike price.

\subsection{On the result of Handa et al. \cite{HSS}} 
Handa et al. \cite{HSS} also discussed local risk-minimization for additive processes.
Note that their discussion is based on the Malliavin-Skorohod calculus undertaken by Di Nunno and Vives \cite{DV}.
In this subsection, we examine Theorem 5.15 of \cite{HSS}.

In the Malliavin-Skorohod calculus, the sample space $\Omega$ must be the canonical space for a pure jump additive process,
of which each element is represented by arranging data from all jumps.
More precisely, $\omega\in\Omega$ is written as $\omega=((t_1,x_1),(t_2,x_2),\dots)$,
where $\omega$ corresponds to the path with jumps of size $x_i$ at time $t_i$ for $i=1,2,\dots$.
Note that $\Omega$ includes all finite and infinite sequences as well as the empty sequence.
Now, we denote $\ve^+_{t,x}\omega:=((t,x),(t_1,,x_1),(t_2,x_2),\dots)$, where $\omega=((t_1,x_1),(t_2,x_2),\dots)$.
That is, $\ve^+_{t,x}\omega$ is the element of $\Omega$ by adding to $\omega$ a jump of size $x$ at time $t$.
Then, for a random variable $X$ on $\Omega$, we define the differential operator $\Psi_{t,x}$ as
\[ \Psi_{t,x}X(\omega):=X(\ve^+_{t,x}\omega)-X(\omega). \]
In this paper, we no longer mention the Malliavin-Skorohod calculus. For its details, see \cite{DV} and \cite{HSS}.

Let us go back to LRM strategies. \cite{HSS} treated financial market models in which the asset price process $S$ is given by $\calE(M^H)$,
where $M^H=\{M^H_t\}_{t\in[0,T]}$ is a process expressed as
\begin{equation}\label{eq-HSS} M^H_t = \int_0^t\alpha_sds+\int_0^t\int_{\bbR_0}\beta_{s,x}\tN(ds,dx). \end{equation}
Here, $\alpha$ and $\beta$ are predictable processes, and $\tN$ is the same one as defined in Section 3.
Furthermore, in their Theorem 5.15, they derived an expression for LRM strategies in this model framework. Now, we modify it to fit our setting.

\begin{prop}\label{prop-HSS}
Suppose the following five conditions: \vspace{-2mm}
\begin{description} 
 \item[(B1)] $\d{\int_0^T\int_{\bbR_0}|\theta_{t,x}|\olnu(dt,dx)<\infty \ \mbox{ and } \ \int_0^T\int_{\bbR_0}|\log(1-\theta_{t,x})|\olnu(dt,dx)<\infty}$. \vspace{-2mm}
 \item[(B2)] $\bbE[Z_TF]<\infty$, $\bbE[Z_T|\Psi_{t,x}F|]<\infty$ for any $t\geq0$ and $x\in\bbR_0$,
             and \\ $\d{\bbE\l[\int_0^T\int_{\bbR_0}|\Psi_{t,x}(Z_TF)|\olnu(dt,dx)\r]<\infty}$. \vspace{-2mm}
 \item[(B3)] The structure condition (SC), that is, Assumption \ref{ass-SC}. \vspace{-2mm}
 \item[(B4)] $Z$ is a positive square integrable martingale; and satisfies the reverse H\"older inequality,
             in other words, we can find a constant $C>0$ such that $\bbE[Z_T^2|\calF_{t-}]\leq CZ_{t-}^2$ for any $t\in(0,T]$. \vspace{-2mm}
 \item[(B5)] $\d{\bbE\l[\int_0^T\int_{\bbR_0}|\Psi_{t,x}F|^2\olnu(dt,dx)\r]<\infty}$. \vspace{-2mm}
\end{description}
Then, the LRM strategy $\xi^{1,F}$ for the call option $F=(S_T-K)^+$ is represented as
\begin{equation}\label{eq-LRM1}
\xi^{1,F}_t=\frac{1}{S_{t-}\Sigma_t}\int_{\bbR_0}\bbE_{\bbP^*}[\Psi_{t,x}F|\calF_{t-}](e^x-1)\pi(t,x)dx,
\end{equation}
where $\d{\Sigma_t:=\int_{\bbR_0}(e^x-1)^2\pi(t,x)dx}$.
\end{prop}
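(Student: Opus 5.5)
Since the statement is presented as Theorem 5.15 of \cite{HSS} adapted to our setting, the plan is to map our model onto theirs rather than redo the Malliavin--Skorohod analysis. First, by (\ref{SDE-S}), $S=S_0\calE(M^H)$ with $M^H$ of the form (\ref{eq-HSS}), taking the deterministic coefficients $\alpha_t:=\mu^S_t$ and $\beta_{t,x}:=e^x-1$. Assumption (B3) together with (\ref{eq-lambda}) gives $\lambda_t=\mu^S_t/(S_{t-}\Sigma_t)$. The density process $Z=\calE(M^\lambda)$, with $M^\lambda$ as in (\ref{eq-Mlambda}), then coincides with the one in \cite{HSS}, with their kernel equal to our $\theta_{t,x}=\mu^S_t(e^x-1)/\Sigma_t$. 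Under this identification, (B1), (B2), (B4) and (B5) are exactly the hypotheses of \cite{HSS}: (B1) makes the exponential formula (\ref{eq-prop-Z-1}) meaningful, and (B4) makes the MMM $\bbP^*$ with $d\bbP^*=Z_Td\bbP$ well defined.

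The core argument then follows \cite{AS}. Apply the Clark--Ocone formula under change of measure from \cite{HSS} to $F$ under $\bbP^*$:
\[ F=\bbE_{\bbP^*}[F]+\int_0^T\int_{\bbR_0}\bbE_{\bbP^*}[\Psi_{t,x}F|\calF_{t-}]\,\tN^*(dt,dx), \]
where $\tN^*(dt,dx)=\tN(dt,dx)+\theta_{t,x}\pi(t,x)dtdx$ is the $\bbP^*$-compensated measure. Integrability of $Z_TF$ and $\Psi_{t,x}(Z_TF)$ in (B2) is what licenses this formula. Call the integrand $h_{t,x}$ and split it orthogonally in $L^2(\olnu)$ against $e^x-1$:
\[ h_{t,x}=\xi_tS_{t-}(e^x-1)+\bigl(h_{t,x}-\xi_tS_{t-}(e^x-1)\bigr), \qquad \xi_t:=\frac{1}{S_{t-}\Sigma_t}\int_{\bbR_0}h_{t,x}(e^x-1)\pi(t,x)dx. \]
Since $dS_t=S_{t-}\int(e^x-1)\tN^*(dt,dx)$, the first piece yields $\int\xi\,dS$. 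The second piece yields a $\bbP^*$-martingale $L$ with $\langle L,M\rangle=0$ under $\bbP$, because its integrand is $\pi(t,x)dx$-orthogonal to $e^x-1$ for each $t$.

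It remains to rewrite $L$ under $\bbP$. The drift correction is $\int\int(h-\xi S_{-}(e^x-1))\theta\,\pi\,dxdt$, and it vanishes because $\theta_{t,x}$ is proportional to $e^x-1$. Hence $L$ is also a $\bbP$-local martingale orthogonal to $M$, and $F=\bbE_{\bbP^*}[F]+\int_0^T\xi_tdS_t+L_T$ is a candidate Föllmer--Schweizer decomposition. By Proposition 5.2 of \cite{Sch}, $\xi^{1,F}=\xi$, which is (\ref{eq-LRM1}).

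The main obstacle is the integrability needed for a genuine Föllmer--Schweizer decomposition: $\xi$ must satisfy (\ref{eq-xi}) and $L$ must be $\bbP$-square integrable. The plan is to use (B5) with Jensen's inequality under $\bbP^*$ and Cauchy--Schwarz in $x$. By the reverse Hölder inequality in (B4), $\bbE_{\bbP^*}[\,\cdot\,|\calF_{t-}]$ is controlled in $L^2(\bbP)$; concretely, $\bbE[(\bbE_{\bbP^*}[Y|\calF_{t-}])^2]\le C\,\bbE[Y^2]$ via $Z_T/Z_{t-}$. This gives $\bbE[\int\int h^2\olnu]<\infty$, from which $\bbE\int\xi^2d\langle M\rangle<\infty$ follows. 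The finite-variation part of (\ref{eq-xi}) then comes from (SC3)-type bounds on $\lambda$. If the direct estimate fails, the fallback is to invoke Theorem 5.15 of \cite{HSS} verbatim for these integrability conclusions.
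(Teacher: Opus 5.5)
Your argument is correct, but it is organized differently from the paper's.

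The paper uses Theorem 5.15 of \cite{HSS} as a black box, and its proof is pure hypothesis bookkeeping:
since $\theta$ is deterministic, $\tH_{t,z}=1$, so Assumption 4.2 of \cite{HSS} reduces to (B2) plus $\theta+\log(1-\theta),\ \log(1-\theta)\in\Dom\Phi$, and the latter follows from (B1) via Remark 5.9 of \cite{DV};
Assumption 5.9 is (B3);
Assumption 5.11 comes from (B2) and (B4);
and condition (5.12), which with $H^*_{t,z}=1$ reads $\bbE[\int_0^T\int_{\bbR_0}(\bbE_{\bbP^*}[\Psi_{t,x}F|\calF_{t-}])^2\olnu(dt,dx)]<\infty$, is derived from (B5) by exactly your reverse H\"older/Cauchy--Schwarz estimate.

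You instead import only the change-of-measure Clark--Ocone formula and carry out by hand what Theorem 5.15 does internally. You project the integrand onto $e^x-1$ in $L^2(\pi(t,x)dx)$. You note that the Girsanov drift of the orthogonal remainder vanishes because $\theta_{t,x}$ is proportional to $e^x-1$. You get (\ref{eq-xi}) and square integrability of $L$ from the same estimate, with the deterministic bound $K_T<\infty$ handling the $dA$-part. You conclude with Proposition 5.2 of \cite{Sch}.

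Your route makes visible why (\ref{eq-LRM1}) has this form and exactly where (B3)--(B5) enter. The paper's route is shorter and pins down precisely which hypotheses of \cite{HSS} are being met.

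On that point, your claim that (B1), (B2), (B4), (B5) are ``exactly'' the hypotheses of \cite{HSS} is inaccurate; they need the translation above. Your simplified Clark--Ocone integrand $\bbE_{\bbP^*}[\Psi_{t,x}F|\calF_{t-}]$ also tacitly uses $\tH=H^*=1$, and you should say so. You can verify it directly from $\Psi_{t,x}Z_T=-\theta_{t,x}Z_T$, which is where (B1) is really used, together with the $\bbP$-Clark--Ocone formula for $Z_TF$ under (B2).
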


\proof Theorem 5.15 in \cite{HSS} shows that (\ref{eq-LRM1}) holds under their Assumptions 4.2, 5.9, and 5.11, and (5.12);
but note that $\nu_t$ and $\olnu$ are used incorrectly in \cite{HSS}.
Thus, we have only to verify whether the assumptions in \cite{HSS} hold under the five conditions (B1)-(B5).

Remark that $\Psi_{t,x}v=0$ when $v$ is a deterministic function. Moreover,
\[ \Psi_{t,x}(Z_TF)=Z_T\Psi_{t,x}F+F\Psi_{t,x}Z_T+\Psi_{t,x}Z_T\Psi_{t,x}F \]
holds by the definition of $\Psi_{t,x}$. Additionally, $\tH_{t,z}$, which appears in Assumption 4.2 of \cite{HSS}, is given by 1 in our setting.
Thus, Assumption 4.2 of \cite{HSS} is reduced to (B2) and ``$\theta+\log(1-\theta)$, $\log(1-\theta)\in\Dom\Phi$",
where we do not introduce the definition of $\Dom\Phi$, but Remark 5.9 of \cite{DV} ensures that $v\in L^1(\olnu\times\bbP)$ is a sufficient condition for $v\in\Dom\Phi$.
As a result, Assumption 4.2 of \cite{HSS} is derived from (B1) and (B2).
Next, Assumption 5.9 in \cite{HSS} is the same as (B3). In addition, Assumption 5.11 in \cite{HSS} is satisfied by combining (B2) with (B4).

Only (5.12) in \cite{HSS} remains. It can be rewritten as
\[
\bbE\l[\int_0^T\int_{\bbR_0}\l(\bbE_{\bbP^*}[\Psi_{t,x}F|\calF_{t-}]\r)^2\olnu(dt,dx)\r]<\infty
\]
from the view of (5.14) of \cite{HSS} by substituting 1 for $H^*_{t,z}$ in (5.14) of \cite{HSS}.
Thus, (B5), together with the reverse H\"older inequality of $Z$ and Fubini's theorem, implies that
\begin{align*}
\lefteqn{\bbE\l[\int_0^T\int_{\bbR_0}\l(\bbE_{\bbP^*}[\Psi_{t,x}F|\calF_{t-}]\r)^2\olnu(dt,dx)\r]} \\
&= \bbE\l[\int_0^T\int_{\bbR_0}\l(\bbE\l[\frac{Z_T}{Z_{t-}}\Psi_{t,x}F\Big|\calF_{t-}\r]\r)^2\olnu(dt,dx)\r] \\
&\leq \bbE\l[\int_0^T\int_{\bbR_0}\bbE\l[\frac{Z^2_T}{Z^2_{t-}}\Big|\calF_{t-}\r]\bbE\l[|\Psi_{t,x}F|^2\Big|\calF_{t-}\r]\olnu(dt,dx)\r] \\
&\leq C\bbE\l[\int_0^T\int_{\bbR_0}\bbE[|\Psi_{t,x}F|^2|\calF_{t-}]\olnu(dt,dx)\r] \\
&\leq C\bbE\l[\int_0^T\int_{\bbR_0}|\Psi_{t,x}F|^2\olnu(dt,dx)\r]<\infty,
\end{align*}
from which (5.12) in \cite{HSS} follows. This completes the proof of Proposition \ref{prop-HSS}.
\fin

\subsection{Main theorem}
\cite{HSS} asserts in their Corollary 5.17 that an expression for LRM strategies, when $\alpha$ and $\beta$ in (\ref{eq-HSS}) are continuously deterministic functions,
can be derived by the same argument as \cite{AS}; however, as mentioned in Remark \ref{rem-Situ}, the argument in \cite{AS} cannot be applied directly to the case of additive processes.
That is why we employ a different approach to show the square integrability of $Z$ and $S$ in the previous section.
In this subsection, we correct Corollary 5.17 of \cite{HSS} and examine a mathematical expression for LRM strategies in exponential additive models once again.
Therefore, we show that (\ref{eq-LRM1}) holds even if we replace the conditions (B1)-(B5) with (A1)-(A6).
Here, the conditions (A1)-(A6) are mainly the integrability conditions on the L\'evy measure, which are formulated in a useful way to conduct the numerical experiments discussed later.
To simplify notation, the conditions (A1)--(A6) will be collectively referred to Assumption \ref{ass-A} as follows:

\renewcommand{\theassA}{\Alph{assA}}
\begin{assA}\label{ass-A}\begin{description} 
 \item[(A1)] $\gamma_t$ is differentiable, that is, it is expressed as $\d{\gamma_t=\int_0^t\gamma^\prime_sds}$. \vspace{-2mm}
 \item[(A2)] The L\'evy measure $\olnu$ is absolutely continuous with respect to $dt\times dx$, that is,
       there is a non-negative function $\pi$ such that $\olnu(dt,dx)=\pi(t,x)dtdx$ for any $t\in(0,T]$ and any $x\in\bbR_0$. \vspace{-5mm}
 \item[(A3)] $\d{\int_0^T\int_{|x|>1}e^{4x}\olnu(ds,dx)<\infty}$. \vspace{-2mm}
 \item[(A4)] $0\geq \mu^S_t> -\Sigma_t$ for any $t\in(0,T]$, where $\mu^S_t$ and $\Sigma_t$ are defined in (\ref{eq-muS}) and (\ref{eq-Sigma}), respectively.
 \item[(A5)] $\d{\int_0^T\int_{-\infty}^{-1}|\log(1-\theta_{t,x})|^2\olnu(dt,dx)<\infty}$. Recall that $\d{\theta_{t,x}:=\frac{\mu^S_t(e^x-1)}{\Sigma_t}}$. \vspace{-2mm}
 \item[(A6)] $\d{\int_0^T\int_{0<|x|\leq1}|e^x-1|\olnu(ds,dx)<\infty}$, equivalently, $\d{\int_0^T\int_{0<|x|\leq1}|x|\olnu(ds,dx)<\infty}$.
\end{description}\end{assA}

\begin{thm}\label{thm-A} (\ref{eq-LRM1}) holds under Assumption \ref{ass-A}. \end{thm}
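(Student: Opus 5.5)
Since Proposition \ref{prop-HSS} already yields (\ref{eq-LRM1}) under (B1)--(B5), the plan is to verify each of (B1)--(B5) from Assumption \ref{ass-A}. Three of them come directly from Section 3. (B1) is Lemma \ref{lem4}. (B3) is Proposition \ref{prop-SC}. The first half of (B4), that $Z$ is a positive square integrable martingale, is Proposition \ref{prop-Z}. The real work is in (B2), (B5) and the reverse H\"older inequality.

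For the Malliavin--Skorohod differences we use the explicit structure $S_T=S_0e^{L_T}$. Adding a jump $(t,x)$ to $\omega$ multiplies $S_T$ by $e^x$, so $\Psi_{t,x}S_T=S_T(e^x-1)$. Since $y\mapsto(y-K)^+$ is $1$-Lipschitz,
\[ |\Psi_{t,x}F|\leq S_T|e^x-1|. \]
From (\ref{eq-prop-Z-1}), adding a jump multiplies $Z_T$ by $1-\theta_{t,x}$; the compensator terms are deterministic. Hence $\Psi_{t,x}Z_T=-\theta_{t,x}Z_T$ and $\Psi_{t,x}(Z_TF)=Z_T(1-\theta_{t,x})(F\circ\ve^+_{t,x})-Z_TF$. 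Together with $|\theta_{t,x}|\leq|e^x-1|$ (from (A4)), this gives
\[ |\Psi_{t,x}(Z_TF)|\leq C Z_TS_T\bigl(|e^x-1|+(e^x-1)^2\bigr). \]
\begin{itemize}
\item (B5): since $\bbE[S_T^2]<\infty$ (shown in the proof of Proposition \ref{prop-SC}), it reduces to $\int\!\!\int(e^x-1)^2\olnu<\infty$, which is Lemma \ref{lem1}.
\item (B2): we need $\bbE[Z_TS_T]<\infty$. This holds by Cauchy--Schwarz and Proposition \ref{prop-Z}, or directly because $ZS$ is a martingale. We also need $\int\!\!\int(|e^x-1|+(e^x-1)^2)\olnu<\infty$, which follows from (A3), (A6) and Lemma \ref{lem1} as in the proof of Lemma \ref{lem4}. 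Fubini then finishes.
\end{itemize}

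The main obstacle is the reverse H\"older inequality $\bbE[Z_T^2|\calF_{t-}]\leq CZ_{t-}^2$. Because $\theta$ is deterministic, $Z_T/Z_{t-}$ is a functional of the jumps of $L$ on $[t,T]$, and those are independent of $\calF_{t-}$ by the independent increments of the additive process. Its conditional second moment is therefore deterministic. By the computation in the proof of Proposition \ref{prop-Z},
\[ \bbE\l[\frac{Z_T^2}{Z_{t-}^2}\Big|\calF_{t-}\r]=\bbE\l[\frac{\calE_T(\olM^\theta)}{\calE_{t-}(\olM^\theta)}\r]\exp\l\{\int_t^T\int_{\bbR_0}\theta^2_{s,x}\olnu(ds,dx)\r\}. \]
Here we use that $\olnu(\{t\}\times\cdot)=0$, so $t-$ and $t$ can be interchanged. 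The first factor is $1$ by the martingale property of $\calE(\olM^\theta)$ established in Proposition \ref{prop-Z}. Hence we may take
\[ C=\exp\l\{\int_0^T\int_{\bbR_0}\theta^2_{s,x}\olnu(ds,dx)\r\}, \]
which is finite by Lemma \ref{lem1} and (A4).

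The care needed is in justifying the independence and the Markov-type factorisation in the non-stationary setting, and in checking that the pathwise identities for $\Psi_{t,x}$ hold on the canonical space for $\olnu$-a.e. $(t,x)$. Once these are in place, all five conditions hold and Proposition \ref{prop-HSS} gives (\ref{eq-LRM1}).
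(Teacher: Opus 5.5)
Your proposal is correct. For (B1), (B3), (B5), all three parts of (B2) and the first half of (B4) it follows the paper's proof: the same appeals to Lemma \ref{lem4} and Propositions \ref{prop-SC} and \ref{prop-Z}, the same identities $\Psi_{t,x}S_T=(e^x-1)S_T$ and $\Psi_{t,x}Z_T=-\theta_{t,x}Z_T$, and the same bound of $|\Psi_{t,x}(Z_TF)|$ by $Z_TS_T$ times a function of $x$ that is $\olnu$-integrable by (A3), (A6) and Lemma \ref{lem1}. Your 1-Lipschitz observation simply replaces the case analysis in (\ref{eq-thm-A-1}).

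The one genuine difference is the reverse H\"older inequality. The paper notes that $\langle M^\lambda\rangle=K$ is deterministic with $K_T<\infty$, i.e.\ $K_T\in L^\infty$, and cites Proposition 3.7 of Choulli et al. You instead compute directly from the factorisation $Z_t^2=\calE_t(\olM^\theta)\exp\{\int_0^t\int_{\bbR_0}\theta^2_{s,x}\olnu(ds,dx)\}$ obtained in the proof of Proposition \ref{prop-Z}. This gives the exact identity $\bbE[Z_T^2|\calF_{t-}]=Z_{t-}^2\exp\{\int_t^T\int_{\bbR_0}\theta^2_{s,x}\olnu(ds,dx)\}$. It also gives the explicit constant $C=\exp\{\int_0^T\int_{\bbR_0}\theta^2_{s,x}\olnu(ds,dx)\}$, which equals $e^{K_T}$ because $\int_{\bbR_0}\theta^2_{t,x}\pi(t,x)dx=(\mu^S_t)^2/\Sigma_t$.

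Your route is self-contained and quantitative. The paper's one-line citation is framed only in terms of boundedness of $K_T$, so it does not depend on $\theta$ being deterministic.

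The independence argument you flag as the delicate point is actually unnecessary. $\calE_{t-}(\olM^\theta)$ is positive and $\calF_{t-}$-measurable, and $\calE(\olM^\theta)$ is a uniformly integrable martingale. Hence $\bbE[\calE_T(\olM^\theta)|\calF_{t-}]=\calE_{t-}(\olM^\theta)$ directly. Even the supermartingale property of this positive stochastic exponential would give the needed inequality.
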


\proof
It suffices to see that all of (B1)--(B5) hold under Assumption \ref{ass-A}.
Lemma \ref{lem4} and  Proposition \ref{prop-Z} guarantee (B1) and the first condition of (B4), respectively. As seen in Proposition \ref{prop-SC}, (B3) follows. 
In addition, $\d{\sup_{t\in[0,T]}|S_t|}$ is square integrable by the proof of Proposition \ref{prop-SC}.
Hence, we obtain the first condition of (B2). A similar calculation to Lemmas 5.21 and 5.22 of \cite{HSS} provides
\begin{equation}\label{eq-PsiF} \Psi_{t,x}F = \Psi_{t,x}(S_T-K)^+ = (e^xS_T-K)^+-(S_T-K)^+. \end{equation}
Thus, $|\Psi_{t,x}F| \leq (e^x+1)S_T$, which implies the second condition of (B2).
Next, we examine the second condition of (B4). 
First, $Z=\calE(M^\lambda)$ and $\langle M^\lambda\rangle=K$ hold, where $M^\lambda$ is defined in (\ref{eq-Mlambda}) and $K$ is defined in (SC3) of Assumption \ref{ass-SC}.
Note that $K_T$ is finite and deterministic, that is, in $L^\infty$; and $M^\lambda$ is a square integrable martingale by the proof of Proposition \ref{prop-Z}.
Thus, Proposition 3.7 of Choulli et al. \cite{CKS} implies that $Z$ satisfies the reverse H\"older inequality.

Now, we present (B5) before proving the third condition of (B2). By (\ref{eq-PsiF}), we have
\begin{align}\label{eq-thm-A-1}
(\Psi_{t,x}F)^2
&= \l\{(e^xS_T-K)^+-(S_T-K)^+\r\}^2 \nonumber \\
&= (e^x-1)^2S_T^2{\bf 1}_{\{S_T\geq K\}}{\bf 1}_{\{x\geq\log(K/S_T)\}}+(S_T-K)^2{\bf 1}_{\{S_T\geq K\}}{\bf 1}_{\{x<\log(K/S_T)\}} \nonumber \\
&\hspace{5mm} +(e^xS_T-K)^2{\bf 1}_{\{S_T<K\}}{\bf 1}_{\{x\geq\log(K/S_T)\}} \nonumber \\
&\leq (e^x-1)^2S_T^2{\bf 1}_{\{S_T\geq K\}}{\bf 1}_{\{x\geq\log(K/S_T)\}}+(S_T-e^xS_T)^2{\bf 1}_{\{S_T\geq K\}}{\bf 1}_{\{x<\log(K/S_T)\}} \nonumber \\
&\hspace{5mm} +(e^xS_T-S_T)^2{\bf 1}_{\{S_T<K\}}{\bf 1}_{\{x\geq\log(K/S_T)\}} \nonumber \\
&\leq (e^x-1)^2S_T^2.
\end{align}
By (A3), Lemma \ref{lem1}, and the square integrability of $S_T$, (B5) is satisfied.

It remains to show the third condition of (B2). To this end, we calculate $\Psi_{t,x}(Z_TF)$.
By the definition of $\Psi_{t,x}$, we have $\Psi_{t,x}Z_T=-\theta_{t,x}Z_T$. Thus, (\ref{eq-PsiF}) implies that
\begin{align*}
\Psi_{t,x}(Z_TF)
&= Z_T\Psi_{t,x}F+F\Psi_{t,x}Z_T+\Psi_{t,x}F\cdot\Psi_{t,x}Z_T \\
&= Z_T\Psi_{t,x}F-\theta_{t,x}Z_T(F+\Psi_{t,x}F) = Z_T\Psi_{t,x}F-\theta_{t,x}Z_T(e^xS_T-K)^+.
\end{align*}
We have then
\begin{align*}
|\Psi_{t,x}(Z_TF)|
&\leq Z_T\l\{|\Psi_{t,x}F|+|\theta_{t,x}|(e^xS_T-K)^+\r\} \leq Z_T|e^x-1|\l(S_T+\frac{|\mu^S_t|}{\Sigma_t}e^xS_T\r) \\
&\leq Z_TS_T|e^x-1|(1+e^x) \leq Z_TS_T\l\{(1+e)|e^x-1|{\bf 1}_{\{x\leq1\}}+e^{2x}{\bf 1}_{\{x>1\}}\r\}
\end{align*}
by (\ref{eq-thm-A-1}). Consequently, (A3) and (A6) imply the third condition of (B2).
\fin

\subsection{Numerically tractable expression}
The goal of this subsection is to convert (\ref{eq-LRM1}) into a numerically tractable form.
Throughout this subsection, we fix $t\in[0,T]$ and $K>0$. By (\ref{eq-PsiF}), (\ref{eq-LRM1}) is rewritten as
\begin{equation}\label{eq-LRM2} \xi^{1,F}_t=\frac{1}{S_{t-}\Sigma_t}\int_{\bbR_0}\bbE_{\bbP^*}[(e^xS_T-K)^+-(S_T-K)^+|\calF_{t-}](e^x-1)\pi(t,x)dx. \end{equation}
Denote the characteristic function of $L_T-L_t=\log(S_T/S_t)$ under the MMM $\bbP^*$ as
\[ \phi^*_{t,T}(z):=\bbE_{\bbP^*}\l[e^{iz(L_T-L_t)}\r], \ \ \ z\in\bbC. \]
For later use, we calculate $\phi^*_{t,T}$. To this end, we rewrite (\ref{SDE-S}) as
\begin{equation}\label{SDE-S*} dS_t=S_{t-}\int_{\bbR_0}\l(e^x-1\r)\tN^*(dt,dx), \ \ \ S_0>0, \end{equation}
where $\pi^*(t,x):=\l(1-\theta_{t,x}\r)\pi(t,x)$ and $\tN^*(dt,dx):=N(dt,dx)-\pi^*(t,x)dtdx$.
Solving the SDE (\ref{SDE-S*}) by the same way as (\ref{eq-S}), we obtain
\[ L_t=\int_0^t\int_{\bbR_0}\l(1+x{\bf 1}_{\{|x|\leq1\}}-e^x\r)\pi^*(s,x)dsdx+\int_0^t\int_{|x|>1}xN(ds,dx)+\int_0^t\int_{|x|\leq1}x\tN^*(ds,dx), \]
which implies that
\begin{equation}\label{eq-phi*} \phi^*_{t,T}(z) = \exp\l\{\int_t^T\int_{\bbR_0}\l(e^{izx}-ize^x+iz-1\r)\pi^*(s,x)dxds\r\}. \end{equation}

We can see the following proposition:
\begin{prop}[Carr-Madan method]\label{prop-T}
Under Assumption \ref{ass-A}, we have
\begin{equation}\label{eq-prop-T} \bbE_{\bbP^*}[(S_T-K)^+|\calF_{t-}]
=\frac{1}{\pi}\int_0^\infty K^{-iu-R+1}\frac{\phi^*_{t,T}(u-iR)S_{t-}^{iu+R}}{(iu+R-1)(iu+R)}du \end{equation}
for any $R\in(1,2]$. Note that the right-hand side is independent of the choice of $R\in(1,2]$.
\end{prop}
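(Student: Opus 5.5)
The approach is the standard Carr--Madan argument. We damp the call payoff in the log-strike variable, take Fourier transforms, and use independence of increments under $\bbP^*$ to factor out $\phi^*_{t,T}$.

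\emph{Step 1: reduction to a deterministic expectation.} Under $\bbP^*$, $N$ is a Poisson random measure with deterministic compensator $\pi^*(s,x)dsdx$. Indeed, the density process $Z$ is the exponential in (\ref{eq-prop-Z-1}) with deterministic integrand $\log(1-\theta_{s,x})$, and Girsanov's theorem for random measures then applies. So $L$ is still an additive process under $\bbP^*$, and $L_T-L_t$ is independent of $\calF_{t-}$ with characteristic function (\ref{eq-phi*}); continuity of $\olnu$ in time gives $L_{t}=L_{t-}$ a.s. Writing $S_T=S_{t-}e^{L_T-L_t}$, we get
\[ \bbE_{\bbP^*}[(S_T-K)^+|\calF_{t-}]=C(S_{t-}), \qquad C(s):=\bbE_{\bbP^*}[(se^{Y}-K)^+],\quad Y:=L_T-L_t. \]
It therefore suffices to prove the formula for fixed $s>0$.

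\emph{Step 2: Fourier representation in the log-strike.} Set $k=\log K$ and $c(k)=C(s)$ viewed as a function of $k$, and consider $g(k)=e^{Rk}c(k)$ with $R>1$. By Fubini,
\[ \int_{\bbR}e^{iuk}g(k)\,dk=\bbE_{\bbP^*}\l[\int_{-\infty}^{\log s+Y}e^{(iu+R)k}(se^Y-e^k)dk\r]. \]
The inner integral converges at $-\infty$ because $R>0$. Evaluating it gives
\[ \frac{(se^Y)^{iu+R+1}}{(iu+R)(iu+R+1)}. \]
Hence $\hat g(u)=\phi^*_{t,T}(u-i(R+1))\,s^{iu+R+1}/((iu+R)(iu+R+1))$ in the paper's convention.

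This is where $R\in(1,2]$ enters. The paper's formula has $\phi^*(u-iR)$ together with $(iu+R-1)(iu+R)$, which corresponds to damping with $e^{(R-1)k}$, i.e.\ exponent $R-1\in(0,1]$. So I will use damping $e^{(R-1)k}$, which requires $\bbE_{\bbP^*}[e^{RY}]<\infty$ for $R\le2$. To check this, write
\[ \bbE_{\bbP^*}[e^{RY}]=\bbE[Z_T Z_t^{-1}e^{RY}]\le \bbE[Z_T^2]^{1/2}\,\bbE[e^{2RY}]^{1/2} \]
(with the $Z_t^{-1}$ factor handled by independence or conditioning), and note that $\bbE[e^{4Y}]<\infty$ by (A3). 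This is exactly where the exponent $4$ in (A3) is used.

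\emph{Step 3: inversion.} Since $g$ is real, Fourier inversion gives $g(k)=\frac1\pi\int_0^\infty \mathrm{Re}\,(e^{-iuk}\hat g(u))du$. Multiplying by $e^{-(R-1)k}$ with $e^{k}=K$ yields (\ref{eq-prop-T}); the real part is implicit in the statement, since the integrand's imaginary part integrates out by conjugate symmetry. Independence of $R$ follows because both sides equal $C(s)$.

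The main obstacle is justifying the inversion. This needs $\hat g\in L^1$ (or $g\in L^1$ and continuous, with an $L^1$ transform). We have $|\hat g(u)|\le |\phi^*(u-iR)|s^R/|u|^2\le \bbE_{\bbP^*}[e^{RY}]s^R/u^2$, which is integrable at infinity and bounded near $0$ because $R>1$ keeps the denominator away from zero. Also $g\in L^1$, since $c(k)\le s\,\bbE_{\bbP^*}e^Y$ as $k\to-\infty$ and $c(k)\le \bbE_{\bbP^*}[se^Y{\bf 1}_{\{se^Y>e^k\}}]\le s^R e^{-(R-1)k}\,\bbE_{\bbP^*}e^{RY}$ as $k\to+\infty$, with a slightly smaller exponent used for integrability. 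With these bounds, the Fubini step in Step 2 and the inversion in Step 3 are both justified.
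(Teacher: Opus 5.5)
Your proof is correct, but it takes a somewhat different route from the paper's.

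\textbf{Fourier step.} The paper transforms in the log-moneyness variable. It writes the conditional price as $\bbE_{\bbP^*}[f(X_{t,T}-\log(K/S_{t-}))|\calF_{t-}]$ with $f(x)=K(e^x-1)^+$ and damps the \emph{payoff}, $g(x)=e^{-Rx}f(x)$. It computes $\whg(u)=K/((iu-R+1)(iu-R))$ explicitly and then quotes Theorem 2.2 of Eberlein et al. Only three hypotheses need checking: $g$ bounded, continuous and integrable; $\whg\in L^1$; and $\bbE_{\bbP^*}[e^{RX_{t,T}}]<\infty$. Once you settle on the weight $e^{(R-1)k}$, you instead use the original Carr--Madan device of damping the \emph{price} in log-strike, and you do the Fubini and inversion steps by hand. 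Both transforms give the same integrand. Your version is self-contained, while the paper's delegates the inversion to a general theorem.

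\textbf{Exponential moment.} This is also obtained differently. The paper's Lemma \ref{lem-T} shows that $\langle M^*\rangle$ is deterministic and bounded under $\bbP^*$, via Lemmas \ref{lem1}--\ref{lem2} and (A4). It then uses the reverse H\"older inequality of Choulli et al.\ for $S=\calE(M^*)$. You return to $\bbP$ and use that $Z_T/Z_t$ and $Y$ are independent of $\calF_t$. This gives $\bbE[(Z_T/Z_t)^2]=\bbE[Z_T^2]/\bbE[Z_t^2]\le\bbE[Z_T^2]<\infty$ by Proposition \ref{prop-Z}. Cauchy--Schwarz together with $\bbE[e^{2RY}]<\infty$, from the exponent $4$ in (A3), then finishes it. Your argument is more elementary and shows where the $4$ is used.

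\textbf{Independence under $\bbP^*$.} Your explicit Girsanov remark is a genuine improvement. It says that $N$ remains a Poisson random measure with deterministic compensator $\pi^*(s,x)\,ds\,dx$ under $\bbP^*$. The paper uses the resulting independence of $X_{t,T}$ from $\calF_{t-}$ under $\bbP^*$ without stating it.

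\textbf{One slip to fix.} Your tail estimate for $g\in L^1$ as $k\to+\infty$ does not work as written. The bound $c(k)\le s^Re^{-(R-1)k}\bbE_{\bbP^*}[e^{RY}]$ only makes $g(k)=e^{(R-1)k}c(k)$ bounded, not integrable. A smaller exponent gives even weaker decay, and a larger one needs a moment beyond $\bbE_{\bbP^*}[e^{RY}]$ that you have not established. The estimate is unnecessary anyway: since $g\ge 0$, your own Step 2 computation at $u=0$ (Tonelli) gives $\int_{\bbR}g(k)\,dk=s^R\,\bbE_{\bbP^*}[e^{RY}]/((R-1)R)<\infty$ directly.
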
 

\proof Since $\Delta L_t=0$, $\bbP$-a.s., $L_T-L_t$ has the same law under $\bbP^*$ as $L_T-L_{t-}$. Thus, we have
\[ \bbE_{\bbP^*}[(S_T-K)^+|\calF_{t-}] = \bbE_{\bbP^*}\l[\l(S_0e^{L_T-L_t+L_{t-}}-K\r)^+ \Big| \calF_{t-}\r] = \bbE_{\bbP^*}\l[f(X_{t,T}-\log(K/S_{t-})) | \calF_{t-}\r], \]
where $X_{t,T}:=L_T-L_t$ and $f(x):=K(e^x-1)^+$. Fix $R\in(1,2]$ arbitrarily, and denote $g(x):=e^{-R x}f(x)$ and $\whg(u):=\d{\int_{\bbR}e^{iux}g(x)dx}$.
Then, we can see that the following three conditions hold: \vspace{-2mm}
\begin{description} 
 \item[(a)] $g$ is a bounded continuous function in $L^1(\bbR)$.\vspace{-2mm}
 \item[(b)] $\bbE_{\bbP^*}\l[e^{R X_{t,T}}\r]<\infty$.\vspace{-2mm}
 \item[(c)] $\whg\in L^1(\bbR)$.\vspace{-2mm}
\end{description} 
In fact, Condition (a) holds clearly, Lemma \ref{lem-T} below ensures Condition (b), and Condition (c) is satisfied, since we have
\begin{align*}
\whg(u) &= K\int_{\bbR}e^{iux}e^{-Rx}(e^x-1)^+dx = K\int_0^\infty \l(e^{(iu-R+1)x}-e^{(iu-R)x}\r)dx \\
        &= K\l(-\frac{1}{iu-R+1}+\frac{1}{iu-R}\r) = \frac{K}{(iu-R+1)(iu-R)},
\end{align*}
and there is no real root for $(iu-R+1)(iu-R)=0$. Denote
\[ \whf(-u+iR) := \int_{\bbR}e^{(-iu-R)x}f(x)dx = \whg(-u). \]
Then, Theorem 2.2 of Eberlein et al. \cite{EGP} implies that
\begin{align*}
\bbE_{\bbP^*}[(S_T-K)^+|\calF_{t-}]
&= \frac{1}{2\pi}\int_{\bbR}e^{(-R-iu)\log(K/S_{t-})}\phi^*_{t,T}(u-iR)\whf(-u+iR)du \\
&= \frac{1}{2\pi}\int_{\bbR}e^{(-R-iu)\log(K/S_{t-})}\frac{\phi^*_{t,T}(u-iR)K}{(-iu-R+1)(-iu-R)}du \\
&= \frac{1}{2\pi}\int_{\bbR}K^{-iu-R+1}\frac{\phi^*_{t,T}(u-iR)S_{t-}^{iu+R}}{(iu+R-1)(iu+R)}du \\
&= \frac{1}{\pi}\int_0^\infty K^{-iu-R+1}\frac{\phi^*_{t,T}(u-iR)S_{t-}^{iu+R}}{(iu+R-1)(iu+R)}du.
\end{align*}
Note that the last equality holds because the integral is real-valued and the real part of the integrand is even.
Moreover, its value is independent of the choice of $R\in(1,2]$. Consequently, Proposition \ref{prop-T} follows.
\fin

\begin{lem}\label{lem-T} $\bbE_{\bbP^*}\l[e^{2X_{t,T}}\r]<\infty$, where $X_{t,T}=L_T-L_t$. \end{lem}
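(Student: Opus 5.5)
One route is to change measure back to $\bbP$:
\[ \bbE_{\bbP^*}[e^{2X_{t,T}}]=\bbE\l[Z_Te^{2X_{t,T}}\r], \]
and then apply Cauchy--Schwarz to obtain
\[ \bbE\l[Z_Te^{2X_{t,T}}\r]\leq \bbE[Z_T^2]^{1/2}\,\bbE\l[e^{4(L_T-L_t)}\r]^{1/2}. \]
Proposition \ref{prop-Z} gives $\bbE[Z_T^2]<\infty$. Since $L$ has independent increments, $\bbE[e^{4(L_T-L_t)}]=\exp\{\kappa_T(4)-\kappa_t(4)\}$. This is finite because $\kappa_t(4)$ is well defined for all $t\in[0,T]$ under (A3). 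That is precisely where the exponent $4$ in (A3) is needed, so this argument explains the choice of (A3).

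A second, more explicit route checks the claim directly from (\ref{eq-phi*}). Under $\bbP^*$, $L$ is an additive process with L\'evy density $\pi^*=(1-\theta)\pi$; this is a Girsanov statement, which can be read off from the SDE (\ref{SDE-S*}). The exponential moment is then
\[ \bbE_{\bbP^*}[e^{2X_{t,T}}]=\exp\l\{\int_t^T\int_{\bbR_0}\l(e^{2x}-2e^x+1\r)\pi^*(s,x)dxds\r\}=\exp\l\{\int_t^T\int_{\bbR_0}(e^x-1)^2(1-\theta_{s,x})\pi(s,x)dxds\r\}. \]
By (A4) we have $0<1-\theta_{s,x}\leq 1+|e^x-1|$, so the integrand is bounded by $(e^x-1)^2+|e^x-1|^3$. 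This is $\olnu$-integrable by Lemmas \ref{lem1} and \ref{lem2}.

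I would present the Cauchy--Schwarz route, since it uses only established results. The main obstacle there is making sure the independent-increment moment computation is legitimate. In particular, $\bbE[e^{4L_t}]=e^{\kappa_t(4)}$ must hold for every $t$, and this follows from (A3) via Proposition 3.1 of \cite{GOR}. The finiteness of $\kappa_t(4)$ is also needed, and it follows since $e^{4x}-1-4x{\bf 1}_{\{|x|\leq1\}}$ is $O(x^2)$ near zero and bounded for $x<-1$.
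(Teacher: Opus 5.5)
Your proposal is correct, but the route you choose to present differs from the paper's.

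The paper stays entirely under $\bbP^*$. It writes $S=\calE(M^*)$ with $M^*_t=\int_0^t\int_{\bbR_0}(e^x-1)\tN^*(ds,dx)$. It checks $\int_0^T\int_{\bbR_0}(e^x-1)^2\pi^*(t,x)dxdt<\infty$ from Lemmas \ref{lem1}, \ref{lem2} and (A4), using the same bound $(e^x-1)^2+|e^x-1|^3$ as your second route. It concludes that $\langle M^*\rangle$ is deterministic and bounded. It then invokes Proposition 3.7 of \cite{CKS} to get the reverse H\"older inequality $\bbE_{\bbP^*}[S_T^2|\calF_t]\leq CS_t^2$, whence $\bbE_{\bbP^*}[(S_T/S_t)^2]\leq C$.

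Your Cauchy--Schwarz argument instead pulls back to $\bbP$. It needs only $\bbE[Z_T^2]<\infty$ from Proposition \ref{prop-Z} plus the elementary moment $\bbE[e^{4(L_T-L_t)}]=e^{\kappa_T(4)-\kappa_t(4)}$ under $\bbP$. This avoids the external reverse H\"older result. It also avoids identifying the $\bbP^*$-compensator $\pi^*(t,x)dtdx$, which the paper uses implicitly when it calls $M^*$ a $\bbP^*$-martingale (though the paper needs that fact anyway for (\ref{eq-phi*}) and Proposition \ref{prop-T}).

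The paper's route is more economical in integrability. Its proof of this lemma consumes only $(e^x-1)^2\pi^*$-integrability, i.e.\ $e^{3x}$-growth on $\{x>1\}$. Yours needs a fourth exponential moment under $\bbP$ on top of $\bbE[Z_T^2]<\infty$. For the same reason, your remark that this lemma is ``precisely where the exponent $4$ in (A3) is needed'' is not accurate. That exponent is already used in Lemma \ref{lem2} with $k=4$, to control the $\theta^4_{t,x}$ term in the proof of $\bbE[Z_T^2]<\infty$.

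Your second route is essentially the paper's integrability check followed by the L\'evy--Khintchine exponential-moment formula in place of \cite{CKS}. It is valid once you note that $\theta$ is deterministic, so $N$ remains a Poisson random measure under $\bbP^*$ and $L$ is additive there.
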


\proof
We define the process $M^*=\{M^*_t\}_{t\in[0,T]}$ as
\[ M^*_t=\int_0^t\int_{\bbR_0}\l(e^x-1\r)\tN^*(ds,dx). \]
We have then $S=\calE(M^*)$. Here, we show that $M^*$ is a square integrable $\bbP^*$-martingale such that $\langle M^* \rangle$ is bounded.
To this end, it suffices to make sure that $\d{\int_0^T\int_{\bbR_0}\l(e^x-1\r)^2\pi^*(t,x)dxdt<\infty}$;
however, we can derive this from Lemmas \ref{lem1} and \ref{lem2} together with (A4).
By Proposition 3.7 of \cite{CKS}, $S$ satisfies the reverse H\"older inequality under $\bbP^*$, that is,
there is a constant $C>0$ such that $\d{\bbE_{\bbP^*}[S_T^2|\calF_t]\leq CS_t^2}$. 
As a result, we have
\[ \bbE_{\bbP^*}\l[e^{2X_{t,T}}\r] = \bbE_{\bbP^*}\l[e^{2(L_T-L_t)}\r] = \bbE_{\bbP^*}\l[\l(\frac{S_T}{S_t}\r)^2\r]
= \bbE_{\bbP^*}\l[\frac{1}{S_t^2}\bbE_{\bbP^*}[S_T^2|\calF_t]\r]\leq C. \]  
\fin

Let us further modify (\ref{eq-LRM2}). First, we regard the call option price as a function of $K$, and denote it by
\[ f(K):=\bbE_{\bbP^*}[(S_T-K)^+|\calF_{t-}], \ \ \ K>0. \]
Proposition \ref{prop-T} provides that
\[ f(K) = \frac{1}{\pi}\int_0^\infty K^{-iu-R+1}\frac{\phi^*_{t,T}(u-iR)S_{t-}^{iu+R}}{(iu+R-1)(iu+R)}du. \]
Then, since $\bbE_{\bbP^*}[(e^xS_T-K)^+|\calF_{t-}] = e^xf(e^{-x}K)$, we obtain
\begin{align}\label{eq-CM1}
\xi^{1,F}_t
&= \frac{1}{S_{t-}\Sigma_t}\int_{\bbR_0}\{e^xf(e^{-x}K)-f(K)\}(e^x-1)\pi(t,x)dx \nonumber \\
&= \frac{1}{S_{t-}\Sigma_t}\int_{\bbR_0}\frac{1}{\pi}\int_0^\infty\l\{e^{(iu+R)x}-1\r\}K^{-iu-R+1}
   \frac{\phi^*_{t,T}(u-iR)S_{t-}^{iu+R}}{(iu+R-1)(iu+R)}du(e^x-1)\pi(t,x)dx \nonumber \\
&= \frac{1}{\pi S_{t-}\Sigma_t}\int_0^\infty K^{-iu-R+1}\int_{\bbR_0}(e^{(iu+R)x}-1)(e^x-1)\pi(t,x)dx
  \frac{\phi^*_{t,T}(u-iR)S_{t-}^{iu+R}}{(iu+R-1)(iu+R)}du.
\end{align}
If we can obtain an explicit expression for the inner integral $\d{\int_{\bbR_0}(e^{(iu+R)x}-1)(e^x-1)\pi(t,x)dx}$,
then (\ref{eq-CM1}) could be computed numerically using the fast Fourier transform (FFT).

Now, we focus on the inner integral. To this end, we extend the domain of $\kappa_t$ defined in (\ref{eq-kappa}) to 
\[ \calD:=\l\{z\in\bbC \ \Big| \ \int_0^T\int_{|x|>1}e^{\Re(z)x}\pi(t,x)dxdt < \infty\r\}, \]
where $\Re(z)$ denotes the real part of $z$. Here, recall that
\[ \kappa_t(z):=z\gamma_t+\int_0^t\int_{\bbR_0}\l(e^{zx}-1-zx{\bf 1}_{\{|x|\leq1\}}\r)\pi(s,x)dxds. \]
Furthermore, we denote $\d{l_t(z):=\frac{\partial \kappa_t(z)}{\partial t}}$ for $z\in\calD$.
Then, we can see the following:

\begin{lem}\label{lem-Psi}
For any $z\in\bbC$ with $z+1\in\calD$, we have
\[ \int_{-\infty}^\infty \l(e^{zx}-1\r)\l(e^x-1\r)\pi(t,x)dx = l_t(z+1)-l_t(z)-l_t(1). \]
\end{lem}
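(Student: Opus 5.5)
The identity is algebraic, so the plan is to compute $l_t$ explicitly and check that the compensator terms cancel. By (A1) and (A2),
\[ \kappa_t(z)=\int_0^t\Big\{z\gamma'_s+\int_{\bbR_0}\big(e^{zx}-1-zx{\bf 1}_{\{|x|\leq1\}}\big)\pi(s,x)dx\Big\}ds, \]
so for a.e. $t$ (in particular at every continuity point of the inner integrand, which is how $l_t$ is already used in (\ref{eq-muS}) and (\ref{eq-Sigma})) we obtain
\[ l_t(z)=z\gamma'_t+\int_{\bbR_0}\big(e^{zx}-1-zx{\bf 1}_{\{|x|\leq1\}}\big)\pi(t,x)dx . \]
Here $z\in\calD$ ensures that the big-jump part of the inner integral is finite for a.e. $t$. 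The small-jump part is finite because $|e^{zx}-1-zx|\leq C_z x^2$ for $|x|\leq1$, and $\int_{|x|\le 1}x^2\pi(t,x)dx<\infty$ for a.e. $t$.

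Next I expand $(e^{zx}-1)(e^x-1)=e^{(z+1)x}-e^{zx}-e^x+1$ and rewrite it as
\[ \big(e^{(z+1)x}-1-(z+1)x{\bf 1}_{\{|x|\le1\}}\big)-\big(e^{zx}-1-zx{\bf 1}_{\{|x|\le1\}}\big)-\big(e^{x}-1-x{\bf 1}_{\{|x|\le1\}}\big). \]
The linear terms cancel because $(z+1)-z-1=0$. For the same reason the drift terms $(z+1)\gamma'_t-z\gamma'_t-\gamma'_t$ vanish. Integrating against $\pi(t,x)dx$ then gives exactly $l_t(z+1)-l_t(z)-l_t(1)$.

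The only real obstacle is justifying that the integral can be split into the three pieces, i.e. that each piece is separately integrable. I need $z+1$, $z$ and $1$ all to lie in $\calD$.
\begin{itemize}
\item For $1$: this follows from (A3).
\item For $z+1$: this is the hypothesis.
\item For $z$: on $x>1$ we have $e^{\Re(z)x}\leq e^{\Re(z+1)x}$. On $x<-1$, both $e^{\Re(z)x}$ and $e^{\Re(z+1)x}$ are bounded whenever $\Re(z)\geq 0$. In general I would use $e^{\Re(z)x}\leq e^{\Re(z+1)x}+ e^{x}\cdot(\text{const})$ where needed, or simply note that $\int_{|x|>1}\pi<\infty$ together with convexity of $a\mapsto e^{ax}$ for $a$ between $0$ and $\Re(z+1)$.
\end{itemize}
If this interpolation fails for some range of $z$ (e.g. $\Re(z)<0$ with heavy negative tails), I would instead argue directly. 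The left-hand side integrand is $O(x^2)$ near $0$, and for $|x|>1$ it is dominated by $|e^{(z+1)x}|+|e^{zx}|+e^x+1$. In the application (Re$\,z=R\in(1,2]$, $z+1$ with real part $\le3<4$) everything is covered by (A3). Passing from the time-integrated $\kappa_t$ to the pointwise derivative $l_t$ is then just Fubini plus the Lebesgue differentiation theorem, using the integrability of $|e^{wx}-1-wx{\bf 1}_{\{|x|\le1\}}|\pi$ over $[0,T]\times\bbR_0$ for $w\in\calD$.
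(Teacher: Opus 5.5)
Your core computation is the paper's proof. Write out $l_t(w)=w\gamma'_t+\int_{\bbR_0}(e^{wx}-1-wx{\bf 1}_{\{|x|\le1\}})\pi(t,x)dx$, expand $(e^{zx}-1)(e^x-1)=e^{(z+1)x}-e^{zx}-e^x+1$, and note that the drift terms and the linear compensator terms cancel because $(z+1)-z-1=0$. Your remarks on a.e.\ $t$, Fubini and splitting the integral are extra care that the paper omits, and they are fine.

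The one step that fails is your third bullet, which tries to deduce $z\in\calD$ from $z+1\in\calD$. Convexity of $a\mapsto e^{ax}$ only gives this when $\Re(z)\ge 0$. The proposed bound $e^{\Re(z)x}\le e^{\Re(z+1)x}+c\,e^{x}$ is false for all sufficiently negative $x$ when $\Re(z)<0$, because the left side dominates both terms on the right.

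No ``direct'' argument can rescue this. On $x<-1$ one has $|e^{zx}-1|\,|e^x-1|\ge(1-e^{-1})(e^{\Re(z)x}-1)$, so the left-hand side is absolutely integrable only if the negative tail of $\pi(t,\cdot)$ has an $e^{\Re(z)x}$-moment. For example, $z=-1$ gives $(e^{-x}-1)(e^x-1)=-4\sinh^2(x/2)$. The left-hand side is then $-\infty$ whenever $\int_{x<-1}e^{-x}\pi(t,x)dx=\infty$, even though $z+1=0\in\calD$ always holds, and $l_t(-1)$ is not even defined.

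So the lemma must be read with $z\in\calD$ as well as $z+1\in\calD$. The paper uses this tacitly: it defines $l_t$ only on $\calD$ and simply writes out $l_t(z)$. You should add that hypothesis, or restrict to $\Re(z)\ge0$ where your interpolation is valid, rather than try to prove it. With that change your argument is complete. It also covers every use in the paper: $z=iu+R$ with $R\in(1,2]$, and $z=1$.
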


\proof
Since $l_t(z)=z\gamma^\prime_t+\d{\int_{\bbR_0}\l(e^{zx}-1-zx{\bf 1}_{\{|x|\leq1\}}\r)\pi(t,x)dx}$ for $z\in\calD$, we obtain
\begin{align*}
\lefteqn{l_t(z+1)-l_t(z)-l_t(1)} \\
&= (z+1)\gamma^\prime_t-z\gamma^\prime_t-\gamma^\prime_t \\
&\hspace{5mm}+\int_{\bbR_0}\l(e^{(z+1)x}-1-(z+1)x{\bf 1}_{\{|x|\leq1\}}-e^{zx}+1+zx{\bf 1}_{\{|x|\leq1\}}-e^{x}+1+x{\bf 1}_{\{|x|\leq1\}}\r)\pi(t,x)dx \\
&= \int_{\bbR_0}\l(e^{(z+1)x}-e^{zx}-e^x+1\r)\pi(t,x)dx.
\end{align*}
\fin

\noindent
Since $iu+R+1\in\calD$ for any $R\in(1,2]$ and $u\in\bbR$ by (A3), Lemma \ref{lem-Psi} implies that 
\begin{equation}\label{eq-InnerInt} \int_{\bbR_0}(e^{(iu+R)x}-1)(e^x-1)\pi(t,x)dx = l_t(iu+R+1)-l_t(iu+R)-l_t(1). \end{equation}
In addition, we calculate the characteristic function $\phi^*_{t,T}$ using Lemma \ref{lem-Psi}. (\ref{eq-phi*}) implies that
\begin{align}\label{eq-phi*2}
\log(\phi^*_{t,T}(z)) 
&= \int_t^T\int_{\bbR_0}\l(e^{izx}-ize^x+iz-1\r)\pi^*(s,x)dxds \nonumber \\
&= \int_t^T\int_{\bbR_0}\l(e^{izx}-1-iz(e^x-1)\r)\l(1-\frac{\mu^S_t}{\Sigma_t}(e^x-1)\r)\pi(s,x)dxds \nonumber \\
&= \int_t^T\l(l_s(iz)-izl_s(1)\r)ds-\int_t^T\frac{\mu^S_t}{\Sigma_t}\int_{\bbR_0}\l(e^{izx}-1-iz(e^x-1)\r)(e^x-1)\pi(s,x)dxds \nonumber \\
&= \int_t^T\l(l_s(iz)-izl_s(1)\r)ds-\int_t^T\frac{\mu^S_t}{\Sigma_t}\l\{l_s(iz+1)-l_s(iz)-l_s(1)-iz\l(l_s(2)-2l_s(1)\r)\r\}ds.
\end{align}

%
%
\section{Numerical experiments for VGSSD}\setcounter{equation}{0}
In this section, we introduce the variance-gamma scaled self-decomposable (VGSSD) process, undertaken by Carr et al. \cite{CGMY}, as an example of additive processes
and, using (\ref{eq-prop-T}), (\ref{eq-InnerInt}), and (\ref{eq-phi*2}), conduct numerical experiments computing the values of LRM strategies for exponential VGSSD models.

To introduce the VGSSD process, we prepare some terminologies.
\begin{defn}
\begin{enumerate}
\item A random variable $X$ is said to be self-decomposable if, for any $c\in(0,1)$, there exists an independent random variable $X^{(c)}$ such that $X\dequiv cX+X^{(c)}$,
      where $X_1\dequiv X_2$ means that $X_1$ and $X_2$ are identically distributed.
\item A stochastic process $Y=\{Y_t\}_{t\geq0}$ is self-similar if there is an $H>0$ such that $Y_{at}\dequiv a^HY_t$ for any $a>0$ and $t\geq0$.
      To emphasize the exponent $H$, we say that $Y$ is $H$-self-similar.
      Furthermore, $Y$ is called broad-sense $H$-self-similar if, for any $a>0$, there is a function $c_t$ from $[0,\infty)$ to $\bbR$ such that
      $Y_{at}\dequiv a^HY_t+c_t$ holds for any $t\geq0$.
\item A stochastic process is called a Sato process if it is a self-similar additive process, and the distribution at unit time is self-decomposable.
\end{enumerate}
\end{defn}

\noindent
The VGSSD process is a generalization of the variance-gamma (VG) process.
Here, a L\'evy process defined as a Brownian motion subordinated to a gamma process is called a VG process.
In other words, VG processes can be described as $\sigma B_{G_t}+\mu G_t$, where $\sigma>0$, $\mu\in\bbR$, $\{B_t\}_{t\geq0}$ is a one-dimensional standard Brownin motion,
and $\{G_t\}_{t\geq0}$ is a Gamma process.
Besides, any VG process does not possess a Brownian component, and its L\'evy density $\nu^{VG}$ is given by $\d{\nu^{VG}(x)=\frac{h^{VG}(x)}{|x|}}$ for $x\in\bbR_0$, where
\[ h^{VG}(x)=C\l(e^{-Mx}{\bf 1}_{\{x>0\}}+e^{Gx}{\bf 1}_{\{x<0\}}\r), \ \ \ C, G, M>0. \]
Note that the distribution of a VG process at unit time is called a VG distribution, which is self-decomposable; and
a VGSSD process is defined as a Sato process whose distribution at unit time follows a VG distribution.
Here, let $X^{VG}=\{X^{VG}_t\}_{t\geq0}$ denote a VG process with parameters $C$, $G$, and $M$.
We consider a market model in which the log-price process $L$ is given by a VGSSD process associated with $X^{VG}$, with self-similarity exponent $H$.
Note that $L$ is $H$-self-similar, and $L_1\dequiv X^{VG}_1$ holds.
Then, Theorem 1 of \cite{CGMY} implies that the L\'evy density $\pi$ and the drift function $\gamma_t$ for $L$ are, respectively, given by
\[ \pi(t,x)=CHt^{-1-H}\l(Me^{-Mt^{-H}x}{\bf 1}_{\{x>0\}}+Ge^{Gt^{-H}x}{\bf 1}_{\{x<0\}}\r), \]
and
\[ \gamma_t = \int_0^t\int_{|x|\leq1}x\olnu(ds,dx). \]

Next, we add a differentiable function $\rho_t$ to $\gamma_t$ as follows:
\[ \gamma_t = \rho_t+\int_0^t\int_{|x|\leq1}x\olnu(ds,dx), \ \ \ t\in(0,T]. \]
Under this setting, $L$ is no longer self-similar, but it still retains the broad-sense self-similarity.
Now, we examine Assumption \ref{ass-A}. First, (A1) and (A2) are satisfied. Additionally, suppose
\begin{equation}\label{ass-VGSSD} 4<MT^{-H}. \end{equation}
Then, (A3) is also satisfied. Besides, we can confirm (A6) as follows: 
\begin{align*}
\int_0^T\int_{|x|\leq1}|x|\pi(t,x)dxdt
&=    CH\int_0^Tt^{-1-H}\l\{\int_0^1Mxe^{-Mt^{-H}x}dx+\int_{-1}^0G(-x)e^{Gt^{-H}x}dx\r\}dt \\
&\leq CH\int_0^Tt^{-1-H}\l\{\frac{t^{2H}}{M}+\frac{t^{2H}}{G}\r\}dt < \infty.
\end{align*}
To discuss (A4), we denote
\[ q_t(z) := \int_{\bbR_0}\l(e^{zx}-1\r)\pi(t,x)dx \]
for $z\in\calD$. We have then
\begin{align*}
q_t(z)
&= CHt^{-1-H}\l\{\int_0^\infty\l(e^{zx}-1\r)Me^{-Mt^{-H}x}dx+\int_{-\infty}^0\l(e^{zx}-1\r)Ge^{Gt^{-H}x}dx\r\} \\
&= CHt^{-1-H}\l\{M\int_0^\infty\l(e^{-(Mt^{-H}-z)x}-e^{-Mt^{-H}x}\r)dx+G\int_{-\infty}^0\l(e^{(Gt^{-H}+z)x}-e^{Gt^{-H}x}\r)dx\r\} \\
&= CHt^{-1-H}\l\{M\l(\frac{1}{Mt^{-H}-z}-\frac{1}{Mt^{-H}}\r)+G\l(\frac{1}{Gt^{-H}+z}-\frac{1}{Gt^{-H}}\r)\r\} \\
&= CHt^{-1}z\l\{\frac{1}{Mt^{-H}-z}-\frac{1}{Gt^{-H}+z}\r\} = CHt^{-1+H}z\frac{G-M+2zt^H}{(M-zt^H)(G+zt^H)}.
\end{align*}
Thus, $l_t(z)$ is given as
\begin{equation}\label{eq-func_l} l_t(z) = z\gamma^\prime_t+\int_{\bbR_0}\l(e^{zx}-1-zx{\bf 1}_{\{|x|\leq1\}}\r)\pi(t,x)dx = z\rho^\prime_t+q_t(z), \end{equation}
from which we obtain $\mu^S_t = l_t(1) = \rho^\prime_t+q_t(1)$ and $\Sigma_t = l_t(2)-2l_t(1) = q_t(2)-2q_t(1)$.
As a result, (A4), i.e., $0\geq \mu^S_t> -\Sigma_t$ holds for any $t\in(0,T]$, is equivalent to
\begin{equation}\label{ass-VGSSD-2} -q_t(1) \geq \rho^\prime_t > q_t(1)-q_t(2) \end{equation}
for any $t\in(0,T]$.
As for (A5), as the derivation of sufficient conditions is complicated, it shall be examined individually for given concrete examples.

We introduce two examples that satisfy (\ref{ass-VGSSD-2}) and (A5) as long as $M$ and $T$ are set to meet (\ref{ass-VGSSD}).
\begin{ex}\label{ex-VGSSD}
\begin{enumerate}
\item[(1)] The first example is when $\rho^\prime_t=-q_t(1)$, corresponding to the case where $\mu^S_t=\theta_{t,x}=0$ for any $t\in(0,T]$ and any $x\in\bbR_0$,
      that is, the asset price process $S$ becomes a $\bbP$-martingale. Thus, $\bbP^*=\bbP$ holds. In this case, (A5) is automatically satisfied.
      Besides, (\ref{ass-VGSSD-2}) is also satisfied, since $q_t(2)>2q_t(1)$ holds. \vspace{-3mm}
\item[(2)] Next is the case where $\d{\frac{\mu^S_t}{\Sigma_t}=-\frac{1}{2}}$, in other words, $\d{\rho^\prime_t=-\frac{1}{2}q_t(2)}$.
     We can see immediately that (\ref{ass-VGSSD-2}) and (A5) are satisfied.
\end{enumerate}
\end{ex}

\noindent
Here, we conduct numerical experiments for the two models introduced in Example \ref{ex-VGSSD}.
To this end, taking into account (\ref{eq-InnerInt}), we discretize the right-hand side of (\ref{eq-CM1}) as
\begin{align}\label{eq-approx}
\lefteqn{\frac{1}{\pi S_{t-}\Sigma_t}\sum_{j=0}^{N-1}e^{(-iz_j+1)k}\l\{l_t(iz_j+1)-l_t(iz_j)-l_t(1)\r\}\frac{\phi^*_{t,T}(z_j)S_{t-}^{iz_j}}{(iz_j-1)iz_j}\eta} \nonumber \\
&\hspace{8mm} = \frac{1}{\pi(q_t(2)-2q_t(1))}\sum_{j=0}^{N-1} e^{(-iz_j+1)k}\l\{q_t(iz_j+1)-q_t(iz_j)-q_t(1)\r\}\frac{\phi^*_{t,T}(z_j)S_{t-}^{iz_j-1}}{(iz_j-1)iz_j}\eta,
\end{align}
where $N\in\bbN$ is the number of grid points, $\eta>0$ is the distance between adjacent grid points, $k=\log K$ and $z_j:=\eta j-iR$ for $j=0,\dots,N-1$. 
With appropriate settings of $N$ and $\eta$, (\ref{eq-approx}) provides a sufficiently accurate approximation for $\xi^{1,F}_t$.
To compute (\ref{eq-approx}), we employ the FFT. In our experiments, we take $N=2^{14}$, $\eta=0.25$, and $R=1.75$, which are the standard settings for the FFT.
Throughout all experiments, we set $C=1$, $G=M$, $H=\d{\frac{1}{2}}$, and $S_{t-}=1$. Then, we obtain
\[ q_t(z) = \frac{z^2}{M^2-z^2t}, \ \ \ \mbox{ and } \ \ \ \int_t^Tq_s(z)ds = \log\l(\frac{M^2-z^2t}{M^2-z^2T}\r). \]
From the view of (\ref{eq-phi*2}) and (\ref{eq-func_l}), $\phi^*_{t,T}$ for Example \ref{ex-VGSSD} (1) satisfies
\[ \log(\phi^*_{t,T}(z)) = \int_t^T\l(q_s(iz)-izq_s(1)\r)ds = \log\l(\frac{M^2+z^2t}{M^2+z^2T}\r)-iz\log\l(\frac{M^2-t}{M^2-T}\r). \]
As for (2), we have the following:
\begin{align*}
\log(\phi^*_{t,T}(z))
&= \int_t^T\l(q_s(iz)-izq_s(1)\r)ds+\frac{1}{2}\int_t^T\l\{q_s(iz+1)-q_s(iz)-q_s(1)-iz\l(q_s(2)-2q_s(1)\r)\r\} \\
&= \int_t^T\l\{\frac{1}{2}q_s(iz+1)+\frac{1}{2}q_s(iz)-\frac{1}{2}q_s(1)-izq_s(2)\r\}ds \\
&= \frac{1}{2}\log\l(\frac{M^2-(iz+1)^2t}{M^2-(iz+1)^2T}\r)+\frac{1}{2}\log\l(\frac{M^2+z^2t}{M^2+z^2T}\r)-\frac{1}{2}\log\l(\frac{M^2-t}{M^2-T}\r) \\
&\hspace{5mm} -iz\log\l(\frac{M^2-4t}{M^2-4T}\r).
\end{align*}

We implement two types of experiments as follows:
\begin{enumerate}
\item[(A)] Fix $T=1$ and $K=1$, i.e., we treat the at-the-money (ATM) options.
 Instead, we vary $t$ from $0.01$ to $0.99$ in steps of $0.01$, that is, we compute LRM strategies for ATM call options for various time-to-maturities.
 Note that, although the value of $t$ varies, we set $S_t$ to $1$ for any $t$.
\item[(B)] Fix $t=0$ and $T=0.5$, and vary $K$ from $0.51$ to $1.50$ in steps of $0.01$.
 That is, we compute LRM strategies at time to maturity of $0.5$ for various moneyness levels, from out-of-the-money (OTM) to in-the-money (ITM). 
\end{enumerate}
We conduct the types (A) and (B) of numerical experiments for the models (1) and (2) in Example \ref{ex-VGSSD}, that is, a total of four experiments.
In all experiments, we perform computations in two settings: $M=4$ and $M=16$. All experimental results are shown in Figure \ref{fig1}.
Panels (A1) and (A2) in Figure \ref{fig1} draw the values of LRM strategies for the ATM call option with maturity $T=1$ at $t=0.01,0.02,\dots,0.99$,
where we set $S_t$ to $1$ for all $t=0.01,0.02,\dots,0.99$.
On the other hand, Panels (B1) and (B2) represent the values of LRM strategies at time $0$ for call options with maturity $0.5$ vs. strike prices $K=0.51,0.52,\dots,1.50$,
where the present asset price $S_0$ is $1$. In each panel, the red and blue curves correspond to $M=4$ and $M=16$, respectively.
In panels (B1) and (B2), the blue curves ($M=16$) take values near $1$ when $K<1$ (i.e., ITM options), change rapidly around the ATM, and become almost $0$ for OTM options.
On the other hand, the red curves ($M=4$) change less rapidly than the blue ones.
This result is consistent with the fact that the larger the value of $M$, the less likely large jumps are to occur, i.e., the smaller the fluctuations in asset prices.

\begin{figure}[H]
 \renewcommand{\thesubfigure}{A1} \begin{minipage}{0.5\hsize}\includegraphics[width=70mm]{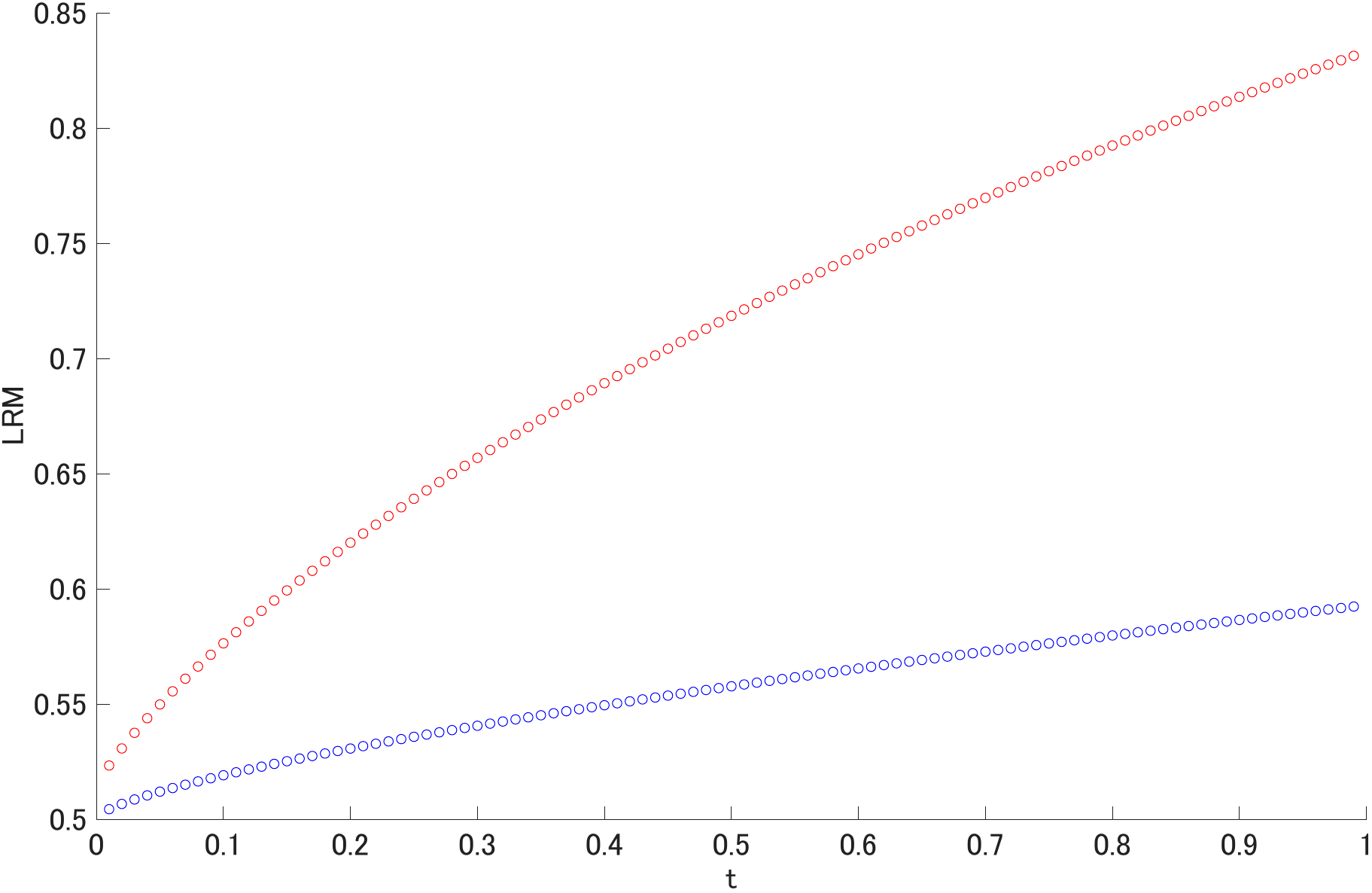}\subcaption{ \ Experiment (A) for Model (1)}\end{minipage}
 \renewcommand{\thesubfigure}{A2} \begin{minipage}{0.5\hsize}\includegraphics[width=70mm]{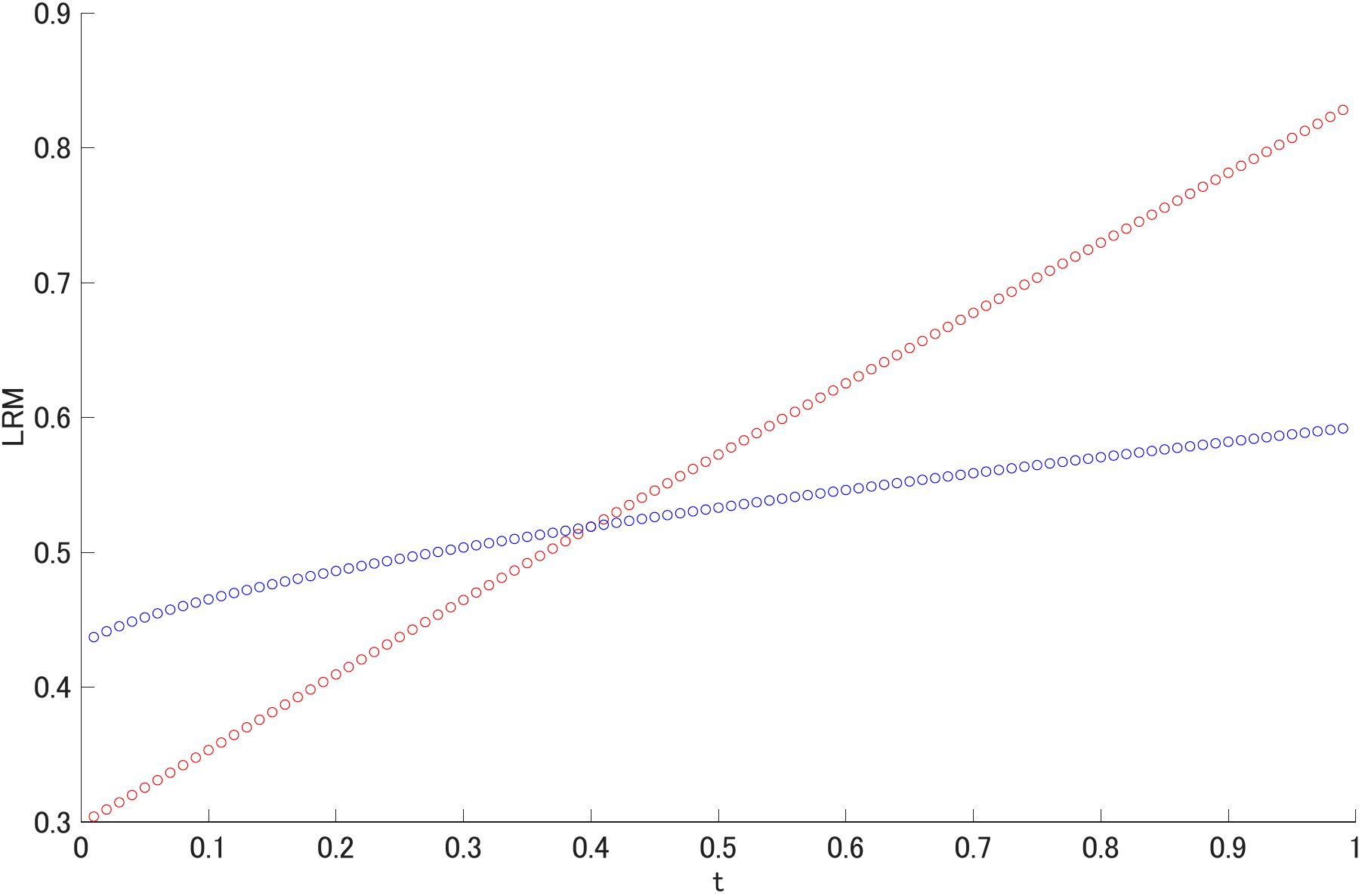}\subcaption{ \ Experiment (A) for Model (2)}\end{minipage} \\ \vspace{5mm} \\
 \renewcommand{\thesubfigure}{B1} \begin{minipage}{0.5\hsize}\includegraphics[width=70mm]{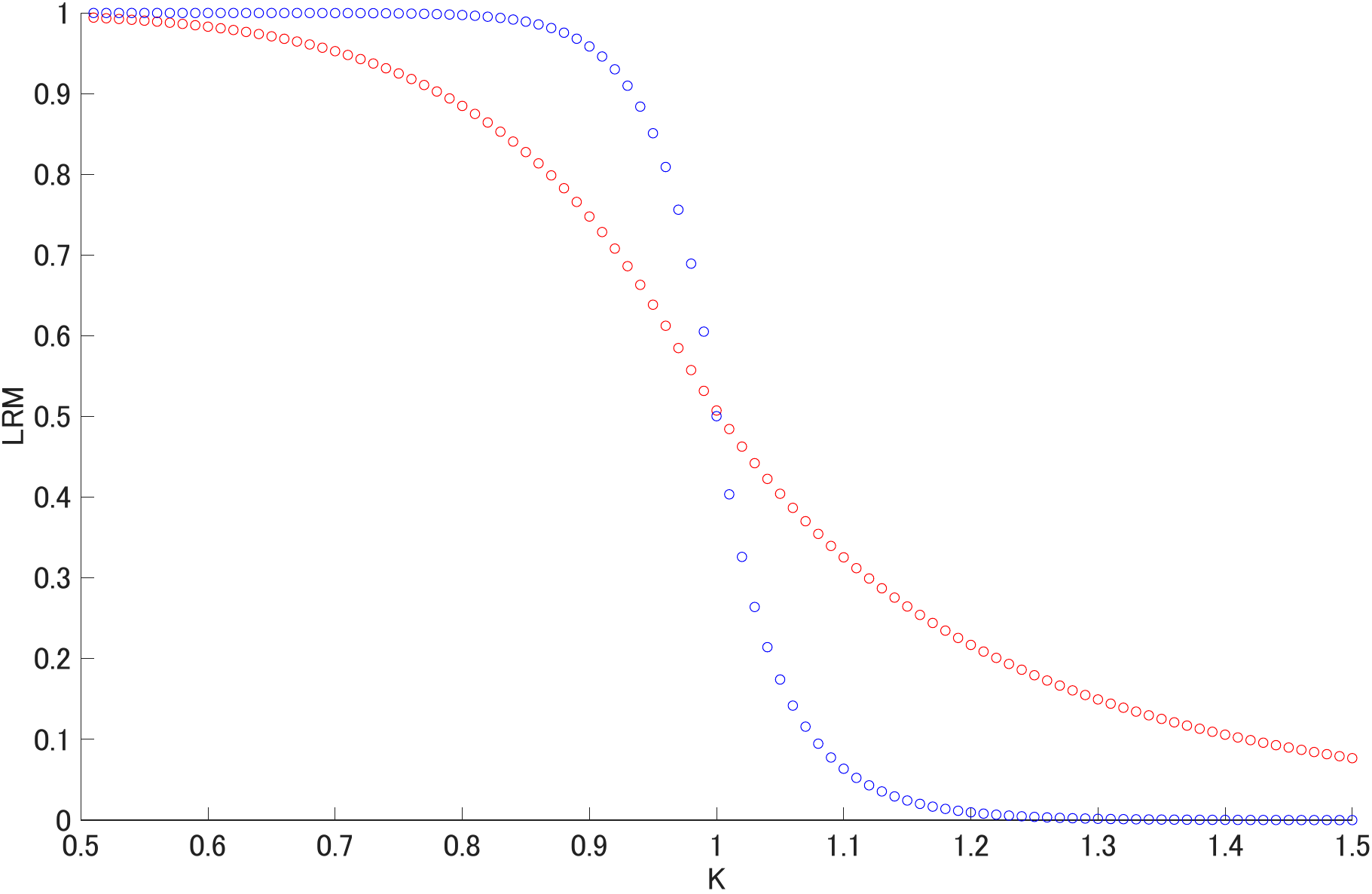}\subcaption{ \ Experiment (B) for Model (1)}\end{minipage} 
 \renewcommand{\thesubfigure}{B2} \begin{minipage}{0.5\hsize}\includegraphics[width=70mm]{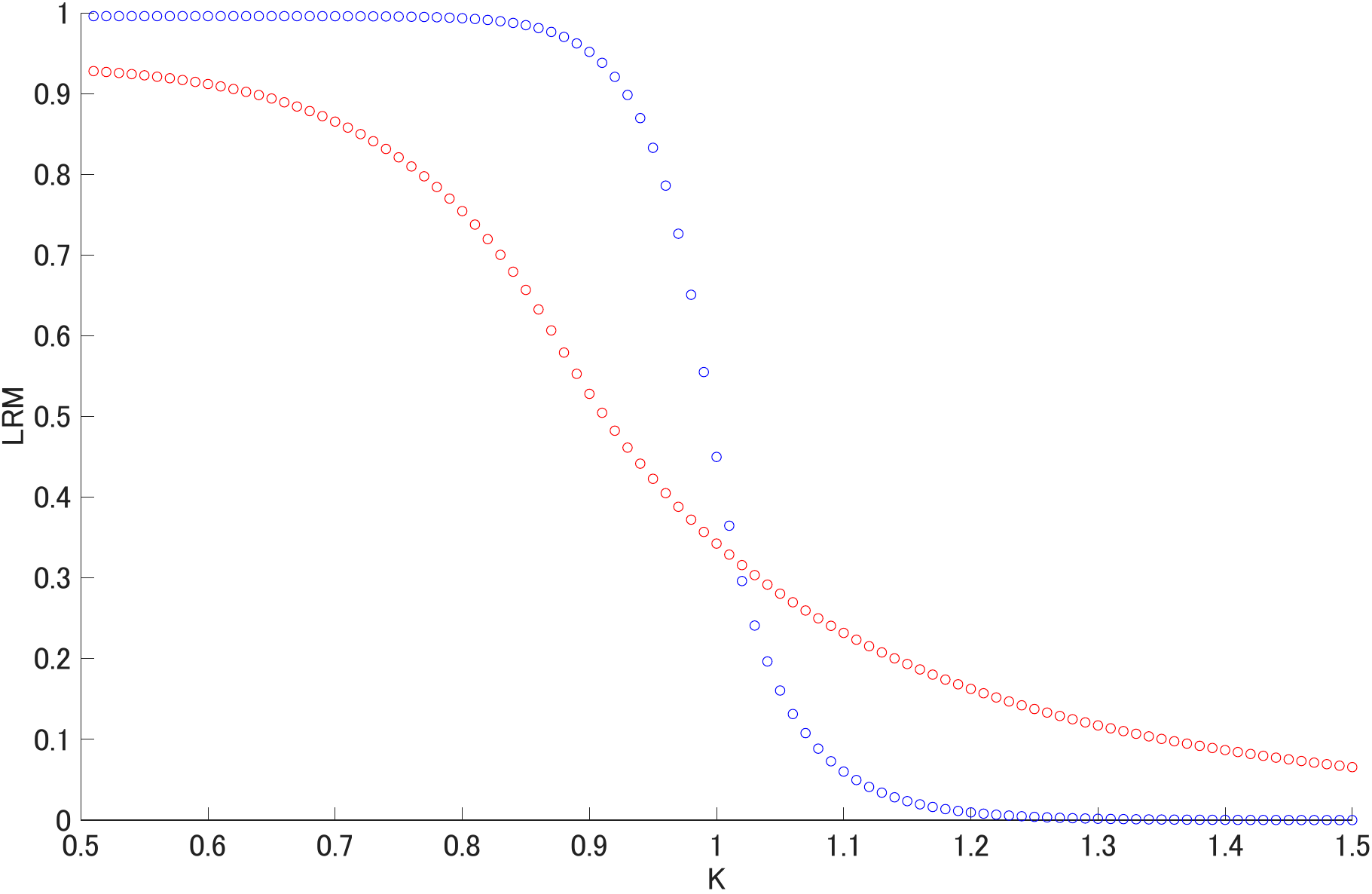}\subcaption{ \ Experiment (B) for Model (2)}\end{minipage} 
\caption{}\label{fig1}
\end{figure}

%
%
\section{Conclusion}\setcounter{equation}{0}
The goal of this paper is to study LRM strategies for exponential additive models. In fact, we derived a mathematical expression and its numerically tractable form.
In particular, to derive the integrability of the asset price process and the Radon-Nikodym density of the MMM,
we employed a different approach from that used in the L\'evy process case.
Furthermore, in Section 5, we introduced the VGSSD process and conducted numerical experiments.
As future work, numerical experiments concerning other exponential additive models remain.


\end{document}